\newtheorem{theorem}{\it Theorem}
\newtheorem{remark}{\it Remark}
\newtheorem{definition}{\it Definition}
\newtheorem{assumption}{\it Assumption}
\def\BibTeX{{\rm B\kern-.05em{\sc i\kern-.025em b}\kern-.08em
    T\kern-.1667em\lower.7ex\hbox{E}\kern-.125emX}}
\begin{document}

\title{\LARGE \bf Observer-Based Distributed Model Predictive Control for String-Stable Multi-vehicle Systems with Markovian Switching Topology

\author{Wenwei Que$^{1}$, Yang Li$^{1}$, Lu Wang$^{1}$, Wentao Liu$^{1}$, Yougang Bian$^{1}$, Manjiang Hu$^{1}$, Yongfu Li$^{2}$}

\thanks{This work was supported by the National Key R\&D Program of China under Grant 2023YFB2504701 and 2023YFB2504704, in part by the Natural Science Foundation of Chongqing under Grant No. CSTB2022NSCQ-LZX0025, the Talent Program of Chongqing under Grant No. cstc2024ycjh-bgzxm0037, the Science and Technology Research Program of Chongqing Municipal Education Commission under Grant No. KJZD-M202300602. (\textit{Corresponding Author: Yang Li})}

\thanks{$^{1}$ Wenwei Que, Yang Li, Lu Wang, Wentao Liu, Yougang Bian, and Manjiang Hu are with the College of Mechanical and Vehicle Engineering, Hunan University, Changsha 410082, China. (email:quewenwei@hnu.edu.cn; lyxc56@gmail.com; wanglu\_aine@163.com; 3187206021@qq.com; byg10@foxmail.com; manjiang\_h@hnu.edu.cn)}
\thanks{$^{2}$
Yongfu Li is with the Key Laboratory of Intelligent Air-Ground Cooperative Control for Universities in Chongqing, School of Automation, Chongqing University of Posts and Telecommunications, Chongqing 400065, China (e-mail: liyongfu@ieee.org)}
}

\maketitle
\thispagestyle{empty}
\pagestyle{empty}
\begin{abstract}
Switching communication topologies can cause instability in vehicle platoons, as vehicle information may be lost during the dynamic switching process. This highlights the need to design a controller capable of maintaining the stability of vehicle platoons under dynamically changing topologies. However, capturing the dynamic characteristics of switching topologies and obtaining complete vehicle information for controller design while ensuring stability remains a significant challenge. In this study, we propose an observer-based distributed model predictive control (DMPC) method for vehicle platoons under directed Markovian switching topologies. Considering the stochastic nature of the switching topologies, we model the directed switching communication topologies using a continuous-time Markov chain. To obtain the leader vehicle's information for controller design, we develop a fully distributed adaptive observer that can quickly adapt to the randomly switching topologies, ensuring that the observed information is not affected by the dynamic topology switches. Additionally, a sufficient condition is derived to guarantee the mean-square stability of the observer. Furthermore, we construct the DMPC terminal update law based on the observer and formulate a string stability constraint based on the observed information. Numerical simulations demonstrate that our method can reduce tracking errors while ensuring string stability.
\end{abstract}

\section{Introduction}\label{Introduction}
\par Connected and automated vehicles (CAVs) play a crucial role in intelligent transportation systems (ITS). 
Vehicle platooning is one of the key applications of CAVs, with significant potential to improve traffic safety, alleviate congestion, enhance fuel efficiency, and reduce emissions \cite{braiteh2024platooning, Zhu2024Finite, Feng2024, 10774505}. In vehicle platooning, vehicle-to-vehicle (V2V) communication technology is used for inter-vehicle information exchange, which is typically described by a communication topology\cite{Wen2024Adaptive}.
In practice, the communication topology may be switched or time-varying.  The reasons for switching topologies include the limitation of wireless communication range, the complex traffic, and unreliable communication conditions (e.g., time delays, packet loss, signal blocking, or communication failures). Frequent switching of the communication topology can lead to the loss of global information and increased communication delays, making it difficult to ensure the stability of vehicle platooning control and potentially leading to collisions \cite{li2020distributed}. This requires a safe and stable (including closed-loop stability and string stability) platoon controller that can work efficiently under highly dynamic and switching communication topologies. However, capturing the highly dynamic characteristics of switching topologies and accurately obtaining complete vehicle information during the topology switching process for controller design remains a challenging task.

\begin{figure}[!t]
    \centering
    \begin{subfigure}{0.48\linewidth}
        \includegraphics[width=\linewidth]{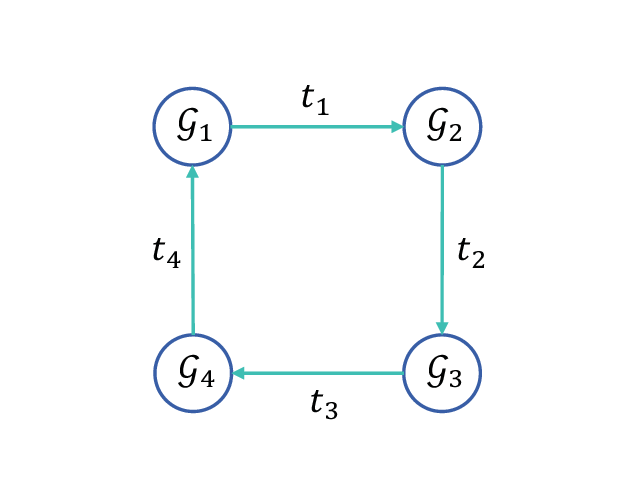}
        \caption{predefined switching topology}
        \label{fig:subfig_predefined_switching}
    \end{subfigure}
    \hfill
    \begin{subfigure}{0.48\linewidth}
        \includegraphics[width=\linewidth]{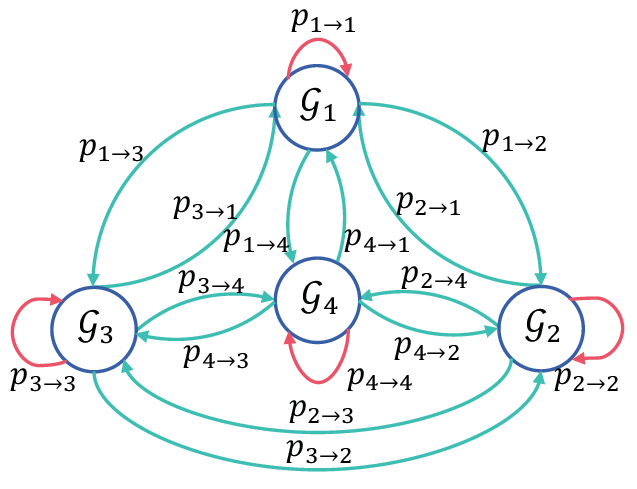}
        \caption{stochastic switching topology}
        \label{fig:subfig_stochastic_switching}
    \end{subfigure}
    \caption{Two switching strategies: (a) predefined switching topology, switching between different graphs with a preset dwell time. (b) stochastic switching topology, switching with a random probability.}
    \label{fig:switching_mainfig}
    \vspace{-0.6cm}
\end{figure}

\par The controller design with switching communication topologies depends on the way of switching, as shown in Fig. \ref{fig:switching_mainfig}, the left involves predefined switching strategies \cite{li2020distributed, Wang2022model, Jun2023Distributed} and the right considers stochastic switching \cite{Ding2024Distributed, wang2022fully, wang2024cyber, Wen2024Adaptive}. Predefined switching strategies usually assume a lower switching frequency, and thus cannot effectively simulate real-world random and dynamic communication environments \cite{li2020distributed}. In contrast, stochastic switching topologies can better simulate the practical communication characteristics (such as Rayleigh fading channel \cite{wang2024cyber}), which have recently become a research focus. For instance, \cite{wang2022fully} uses the continuous-time Markov processes to characterize the time-varying communication topologies, and proposes a fully distributed adaptive anti-windup control protocol for CAVs platooning with switching topologies and input saturation, ensuring closed-loop stability. However, string stability is not considered in \cite{wang2022fully}. \cite{Ding2024Distributed} also investigates a distributed adaptive control scheme with Markovian switching communication topologies. However, it requires global information (i.e., a directed spanning tree rooted at the leader) for all communication topologies. \cite{wang2024cyber} design a fully distributed adaptive observer against the switching topologies on the control stability. However, due to the limitations of the observer design, it only works for the undirected switch topologies. Overall, existing studies on the stability of platoon control with switching communication topologies either overlook string stability or have a strong dependency on the way of switching topology.
\begin{figure}[!t]
\includegraphics[width=0.95\linewidth]{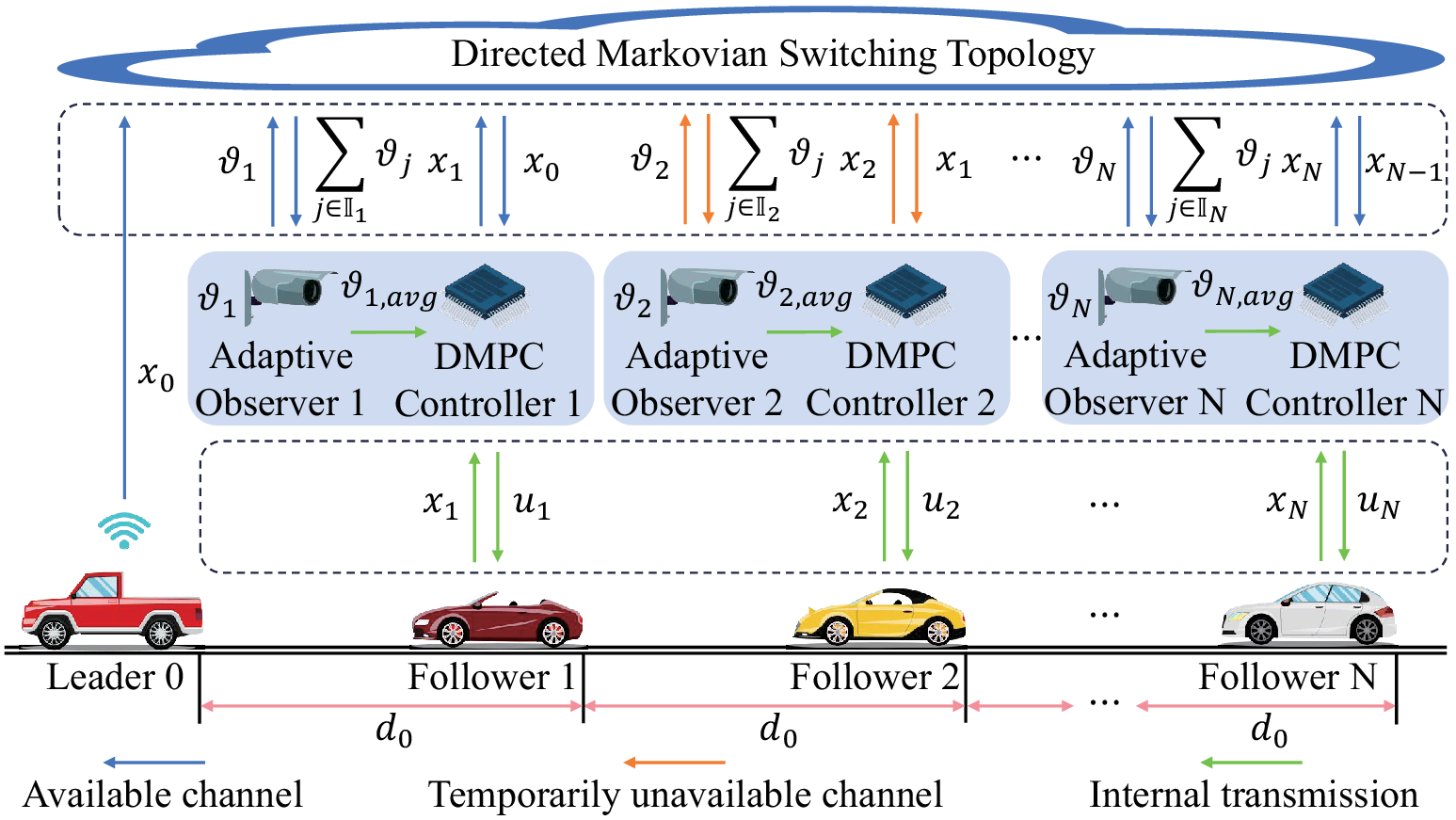}
\centering
\caption{Framework of the proposed DMPC controller. The adaptive observer receives the neighbor's observation via V2V communication, averages it with its own observation, and transmits the result to the DMPC controller. $\vartheta_{i}$ is the state of observer $i$; $\mathbb{I}_i$ is the in-neighbor set of vehicle $i$ ;$\vartheta_{i,avg}$ is the average observation of vehicle $i$. $x_i$ and $u_i$ are the state and control of each vehicle $i$.
The DMPC controller monitors its own state, receives the averaged observation $\vartheta_{i,avg}$ via internal transmission, and obtains the preceding vehicle's state via V2V. }
\label{fig:framework}
\vspace{-0.6cm}
\end{figure}
\par This paper investigates an observer-based distributed model predictive control (DMPC) controller with directed Markovian switching communication topology, as shown in Fig. \ref{fig:framework}. We developed a fully distributed adaptive observer that can quickly adapt to randomly switching topologies. To ensure stability, we designed a string-stable DMPC controller for switching topologies based on the observation of the leader vehicle.
\begin{itemize}
\item  A fully distributed adaptive observer is designed to acquire the state of the leader vehicle under the directed Markovian switching communication topology. This observer does not rely on global information and can quickly adapt to the randomly switching topologies, ensuring that the observed information remains unaffected by the switching of communication topologies. 

\item A distributed model predictive control framework is proposed, in which the terminal update law is constructed based on the observer to guarantee
asymptotic terminal consensus. The string stability constraints are also built using the average observed state of the leader vehicle, ensuring the predecessor-follower string stability.

\item A sufficient condition for the asymptotic mean-square stability of the observer error system is derived using the algebraic Riccati equation. Numerical simulation results demonstrate that the proposed method can achieve faster convergence, smaller tracking errors, and string stability under Markovian switching communication.
\end{itemize} 
\par The rest of the paper is organized as follows: Section~\ref{Preliminaries and Problem Statement} provides the preliminaries and problem statement, and Section~\ref{Control Algorithm Design} introduces the observer-based DMPC method. The numerical simulation results are shown in Section~\ref{Numerical Experiments}, and Section~\ref{Conclusion} concludes the paper.

\section{Preliminaries and Problem Statement}\label{Preliminaries and Problem Statement}

\subsection{Switching Communication Topology Modeling}
We consider a vehicle platoon system consisting of a leader and $N$ followers, with communication topology among them represented by a directed graph $\mathcal{G}\triangleq(\mathcal{V},\mathcal{E})$, where $\mathcal{V}=\{0,1,2,\cdotp\cdotp\cdotp,N\}$ is the set of nodes, representing the collection of vehicles within the platoon. The followers are $i,j\in\mathbb{N}=\{1,2,\cdotp\cdotp\cdotp,N\}$ and the leader is $0$. $\mathcal{E}\subseteq\mathcal{V}\times\mathcal{V}$ is the set of edges, representing the directed connectivity between CAVs in the platoon. Additionally, if there exists a tree-type subgraph that includes all nodes of $\mathcal{G}$ and has the leader as its initial node, it is called a directed spanning tree rooted at the leader. We define the adjacency matrix $\mathcal{A}=\begin{bmatrix}a_{ij}\end{bmatrix}\in\mathbb{R}^{N\times N}$, when vehicle $i$ can receive information from vehicle $j$, $a_{ij}=1$, otherwise, $a_{ij}=0$. Here we assume there is no self-loop, i.e., $a_{ii}=0$. Define the spinning matrix $\mathcal{W}=\text{diag}\begin{bmatrix}\omega_{i0}\end{bmatrix}\in\mathbb{R}^{N\times N}$, when vehicle $i$ can receive information from the leader, $\omega_{i0}=1$, otherwise, $\omega_{i0}=0$. Define the Laplacian matrix $\mathcal{L}=\begin{bmatrix}l_{ij}\end{bmatrix}\in\mathbb{R}^{N\times N}$ , where $l_{ii}=\sum_{j\neq i}^Na_{ij}$ and  $l_{ij}=-a_{ij},i\neq j$. Define the information matrix $\mathcal{M}=\begin{bmatrix}m_{ij}\end{bmatrix}\in\mathbb{R}^{N\times N}$ and $\mathcal{M}=\mathcal{W}+\mathcal{L}$. The in- and out-neighbor sets of vehicle $i$ are respectively defined as $\mathbb{I}_i=\{j\in\mathbb{N},j\neq i|a_{ij}=1\}$ and $\mathbb{O}_i=\{j\in\mathbb{N},j\neq i|a_{ji}=1$\}.
We use time-varying switching graphs $\mathcal{G}(\sigma(t))\in\{\mathcal{G}_{1},\mathcal{G}_{2},\cdotp\cdotp\cdotp,\mathcal{G}_{\iota}\}$ to represent all possible communication topologies. If and only if $\sigma(t)=\varphi \in\{1,2,\cdots,\iota\}$, $\mathcal{G}(\sigma(t))=\mathcal{G}_{\varphi }$, $\iota$ is the number of topologies. An infinitesimal generator is defined by the transition rate matrix  $\mu=(\mu_{\varphi q})\in\mathbb{R}^{\iota\times\iota}$ \cite{meyn2012markov}, for any positive scalar $\Delta t$, as $\Delta t\to0$, one has
\begin{align}\label{eq:infinitesimal generator}
\left.\begin{aligned}&\mathbb{P}\left\{\sigma(t+\Delta t)=q\mid\sigma(t)=\varphi \right\}\\&=\left\{\begin{array}{l}\mu_{\varphi q}\Delta t +o(\Delta t),\text{when }\sigma(t)\text{ jumps from }\varphi \text{ to }q,\\1+\mu_{\varphi \varphi }\Delta t+o(\Delta t),\text{ otherwise},\end{array}\right.\\\end{aligned}\right.
\end{align}
where $\mu_{\varphi q}$ is the transition rate from state $\varphi $ to state $q$ , if $\varphi =q$ , $\mu_{\varphi \varphi }=-\sum_{\varphi \neq q}\mu_{\varphi q}$, otherwise, $\mu_{\varphi q}\geq0$, $o(\Delta t)$ stands for an infinitesimal of a higher order than $\Delta t$, satisfying $\lim_{\Delta t\to0}\frac{o(\Delta t)}{\Delta t}=0$.
\begin{assumption}\label{ass:continuous-time Markov process}
The switching process of the communication topology is controlled by a continuous-time Markov process $\{\sigma(t),t\geq0\}$ whose transition rate matrix $\mu$ is ergodic.
\end{assumption}
\begin{remark}
    The ergodic Markov process has a unique invariant distribution $\pi=\begin{bmatrix}\pi_{1},\pi_{2},\cdotp\cdotp\cdotp,\pi_{\iota}\end{bmatrix}^{T}$ fulfilling $\sum_{q=1}^{\iota}\pi_{q}=1$ and  $ \pi_{q}\geq0$.
\end{remark}
\begin{assumption}\label{ass: a directed spanning tree rooted at the leader}
Define a union graph of switching graphs $\bar{\mathcal{G}}\triangleq\begin{aligned}\bigcup_{\varphi =1}^{\iota}\mathcal{G}_{\varphi}=\left(\mathcal{V},\bigcup_{\varphi=1}^{\iota}\mathcal{E}_{\varphi}\right)\end{aligned}$, $\bar{\mathcal{G}}$ contains a directed spanning tree rooted at the leader. 
\end{assumption}
\par We use $\mathcal{L}(\sigma(t))$, $\mathcal{M}(\sigma(t))$, $\mathbb{I}_i(\sigma(t))$, and  $\mathbb{O}_i(\sigma(t))$ to indicate the time-varying Laplacian matrix, information matrix, in-neighbor sets, and out-neighbor set corresponding to $\bar{\mathcal{G}}(\sigma(t))$. Based on Assumption \ref{ass: a directed spanning tree rooted at the leader}, $\mathcal{M}(\sigma(t))$ and its mathematical expectation $\mathbb{E}[\mathcal{M}(\sigma(t)]$ are both positive definite.
\begin{remark}
Different from \cite{wang2024cyber},  we consider directed switching graphs with asymmetric information matrix $\mathcal{M}(\sigma(t))$. Similar to \cite{wang2022fully}, the convergence analysis of the observer is challenging.    
\end{remark}
\subsection{Vehicle Longitudinal Dynamics}
\par We consider the longitudinal dynamics of a vehicle platoon on a straight and flat road. To simplify the problem,  the nonlinear third-order model for CAVs is formulated as follows:
\begin{align}\label{eq:continuous follower model}
\begin{cases}\dot{p}_i(t)=v_i(t),\\\dot{v}_i(t)=a_i(t),
\\\frac{\eta_{i}}{r_{w,i}}T_i(t)=m_ia_i(t)+C_{A,i}v_i^2(t)+m_igf_i,
\\\delta_i\dot{T}_i(t)+T_i(t)=T_{des,i}(t),i\in\{1, 2,\cdots, N \},\end{cases}
\end{align}
where $p_{i}(t)$, $v_{i}(t)$, $a_{i}(t)$  are the position, velocity, and acceleration of vehicle $i$, respectively. $m_i$, $\eta_{i}$, $r_{w,i}$, $C_{A,i}$, $g$, $f_i$ are vehicle mass, mechanical efficiency, tire radius, aerodynamic drag, gravitational acceleration, and rolling resistance, respectively. $\delta_i$ represents the inertial time lag coefficient. $T_i(t)$ and $T_{des,i}(t)$ are actual and desired control torque on the wheels. We discretize the follower vehicle's dynamics model by the precise feedback linearization strategy,
 \begin{align}\label{eq:feedback linearization strategy}
T_{des,i}(t)=&\frac{r_{w,i}}{\eta_{i}}\nonumber\bigg(m_iu_i(t)+m_igf_i\\&+C_{A,i}v_i(t)\Big(2\delta_i\dot{v}_i(t)+v_i(t)\Big)\bigg),
 \end{align}
where $u_i(t)$ is the control input. Referring to \cite{li2020distributed}, the linear third-order model for follower $i$ is given as follows:
\begin{align}\label{eq:continuous linear follower model}
     &\dot{x}_i(t)=Ax_i(t)+Bu_i(t),\nonumber\\
     &A=\begin{bmatrix}0&1&0\\0&0&1\\0&0&0\end{bmatrix},B=\begin{bmatrix}0\\0\\1\end{bmatrix},
\end{align}
where $x_i(t)=[p_i(t),v_i(t),a_i(t)]^T$ is the state of follower $i$.
\par The leader vehicle follows the reference trajectory of the virtual leader, which can be represented using the following linear system:
\begin{align}\label{eq:continuous leader model}
\dot{x}_{0}(t)=Ax_{0}(t), 
\end{align}
where $x_0(t)=[p_{0}(t),v_{0}(t),a_{0}(t)]^{T}$ is the state of the leader.
\par This study does not assume that the leader's acceleration is zero, and the velocity of the leader might change over time. Considering the time-varying topology, the state of the virtual leader is not globally available and can only be received by a subset of followers through V2V communication. 

\subsection{Control Objective}
\par  We denote the tracking error for follower $i$ as $e_i(t)=x_i(t)-x_0(t)+\tilde{d}_{i0}=[e_{p,i}(t),e_{v,i}(t),e_{a,i}(t)]^T$, where $\tilde{d}_{i0}=[i\cdotp d_0,0,0]^T$, $d_{0}$ is the desired constant distance. $e_{p,i}(t)$, $e_{v,i}(t)$, and $e_{a,i}(t)$ are the position, velocity, and acceleration errors at time $t$, respectively. The platoon system \eqref{eq:continuous linear follower model} with Markovian switching topology is required to achieve mean-square closed-loop stability and string stability in this paper, defined as follows: 
\begin{definition}[Mean-square closed-loop stability\cite{bian2024distributed}]
    System \eqref{eq:continuous linear follower model} satisfies mean-square closed-loop stability if the following condition holds:
\begin{align}\label{eq:Mean-square closed-loop stability}
\lim_{t\to+\infty}\mathbb{E}\left[\|e_i(t)\|^2\right]=0,
\end{align}
where $\mathbb{E}[\cdot]$ represents the mathematical expectation, and $\|\cdot\| $ represents the Euclidean norm.
\end{definition}
\begin{definition}[Predecessor-follower string stability\cite{dunbar2011distributed}]
For a step change of $v_0(0)$, position errors of  follower $i=2,\cdots,N$ can converge to 0 and there exist a constant $\beta_i\in(0,1)$ such that position error satisfy
\begin{align}\label{eq:Predecessor-follower string stability}
    \|e_{p,i}(t)\|_\infty\leq \beta_i\|e_{p,i-1}(t)\|_\infty,
\end{align}
where $\|e_{p,i}(t)\|_\infty$ is the $l_\infty$ norm of $e_{p,i}(t)$, i.e., $\|e_{p,i}(t)\|_\infty=\text{max}_{t\geq0}(|e_{p,i}(t)|)$. It means the maximum magnitude of the position error does not amplify along the platoon. 
\end{definition}

\section{Controller Design}\label{Control Algorithm Design}
\begin{figure*}[!t]
\includegraphics[width=0.9\linewidth]{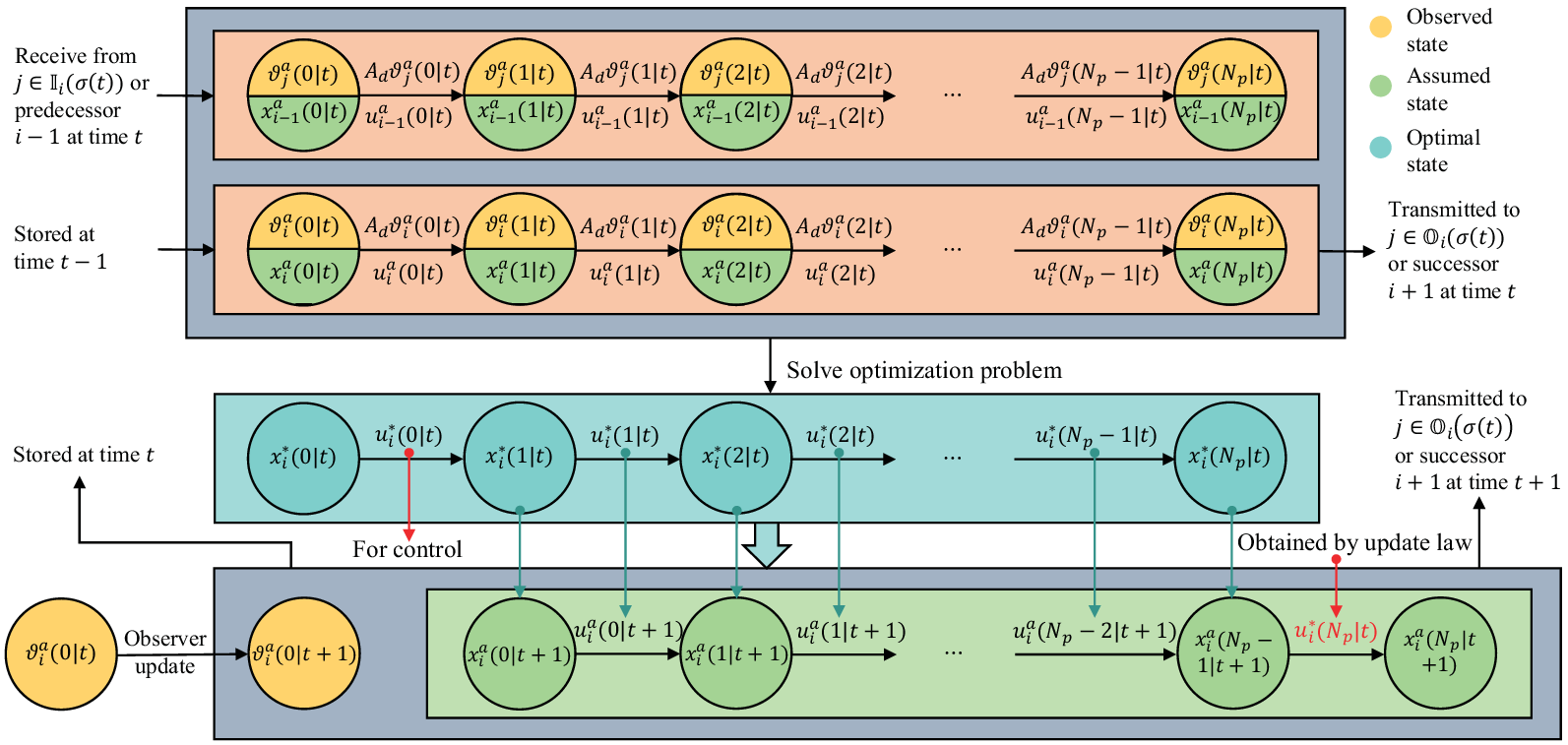}
\centering
\caption{Illustration of observer-based DMPC control strategy. Vehicle $i$ obtains the \textit{assumed trajectories of its predecessor} $x^{a}_{i-1}$, and the \textit{assumed observation trajectories of its in-neighbors vehicle} $j$ at time $t$, i.e., $\vartheta_j^a$, and then uses the information along with its own assumed trajectory $x^{a}_{i}$ and assumed observation trajectories $\vartheta_i^a$ to solve the optimization problem. The first control input of $u_i^*$ from the optimal solution is applied to control vehicle $i$ at time $t$, while the remaining control inputs are combined with the updated terminal control input to generate the assumed trajectory for time $t+1$.
}
\label{fig:principle_scheme}
\vspace{-0.6cm}
\end{figure*}
\subsection{Fully Distributed Adaptive Observer }\label{AA}
\par For every follower $i\in\mathbb{N}$, a fully distributed adaptive observer is designed as follows to directly get the leader's state or indirectly obtain this information from its  neighbors via V2V communication:
\begin{align}\label{eq:observer}
\begin{cases}\dot{\vartheta}_i(t)=A\vartheta_i(t)-\big(\varsigma_i(t)+\varrho_i(t)\big)\Psi_i(\varsigma_{i}(t))P\phi_i(t),\\\dot{\varrho}_i(t)=\phi_i^T(t)\Gamma\phi_i(t),\\\varsigma_i(t)=\phi_i^T(t)\Upsilon\phi_i(t),\end{cases}
\end{align}
where $\vartheta_{i}(t)\in\mathbb{R}^3$ is the state of observer $i$.  $ \varsigma_i(t), \varrho_i(t)\in\mathbb{R}^+$ indicate the adaptive coupling gains with $\varrho_{i}(0)\geq1$ and $\Psi_i(\cdot)$ is a nonlinear increasing non-negative function fulfilling $\Psi_i(\cdot)\geq1$ which will be designed later. Matrices $P$, $\Gamma$, and  $\Upsilon$ with the appropriate dimensions need to be designed. We define the observation error  $\theta_{i}(t) = \vartheta_{i}(t)- x_0 (t)=[\theta_{p,i},\theta_{v,i},\theta_{a,i}]^T$ and $x_0 (t)$ is the leader's state.  $\phi_i (t)$ is the relative observation error, defined as follows:
\begin{align}\label{eq:relative observation error}
\phi_{i}(t)=\sum_{j=1}^{N}m_{ij}(\sigma(t))\theta_{j}(t), 
\end{align}
where $m_{ij}(\sigma(t))$ is the element of $\mathcal{M}(\sigma(t))$. Denote $\kappa_i(t) = (\varsigma_i(t) + \varrho_i(t))\Psi_i(\varsigma_i(t))$. The update law of $\phi_{i}(t)$ gives
\begin{align}\label{eq:update of relative observation error}
\dot{\phi}_{i}(t)=&A\phi_{i}(t)-
\sum_{j=1}^{N}m_{ij}\kappa_j(t)P\phi_{j}(t).
\end{align}

\begin{theorem}
\label{thm: convergence of observer}
    Select coefficient matrix  $\Gamma=I_{n_0}$ and $\Upsilon=P^{-1}$ satisfying $\Gamma=\Upsilon P$, under Assumptions \ref{ass:continuous-time Markov process} and \ref{ass: a directed spanning tree rooted at the leader}, the fully distributed observer  \eqref{eq:observer}-\eqref{eq:update of relative observation error} can reach consensus in the mean square sense, i.e., $\lim_{t\to+\infty}\mathbb{E}[\theta_i(t)]=0$ and $\lim_{t\to+\infty}\mathbb{E}[\varrho_{i}(t)]=0$ for $i\in\{1,2,\cdotp\cdotp\cdotp,N\}$, if there exists a  positive-define solution $P$ that satisfies
\begin{align}\label{eq:theorem}
PA+A^{T}P-2P^{2}+\mathcal{Q}=0,
\end{align}
where the matrix $\mathcal{Q}$ is positive definite and $I_{n_0}$ stands for the $n_0\times n_0$ real identity matrix. $n_0$ is the appropriate dimension. \par For simplicity,  time $t$ and  Markov process $\sigma(t)$ will be omitted unless causing ambiguity.
\end{theorem}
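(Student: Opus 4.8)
\emph{Proof outline.} The plan is to run a stochastic Lyapunov argument on the stacked relative‑error dynamics. First I would aggregate the followers' quantities into $\phi\triangleq[\phi_1^\top,\dots,\phi_N^\top]^\top$ and $\theta\triangleq[\theta_1^\top,\dots,\theta_N^\top]^\top$, so that $\phi=(\mathcal{M}(\sigma)\otimes I_{n_0})\theta$ and, writing $K\triangleq\mathrm{diag}(\kappa_1,\dots,\kappa_N)$ with $\kappa_i=(\varsigma_i+\varrho_i)\Psi_i(\varsigma_i)$, equation \eqref{eq:update of relative observation error} becomes
\begin{align}\label{eq:stacked-plan}
\dot\phi=\big(I_N\otimes A\big)\phi-\big((\mathcal{M}(\sigma)K)\otimes P\big)\phi .
\end{align}
The design choices $\Gamma=I_{n_0}$ and $\Upsilon=P^{-1}$ enter at once: they give $\varsigma_i=\phi_i^\top P^{-1}\phi_i\ge 0$ and $\dot\varrho_i=\|\phi_i\|^2\ge 0$, so each $\varrho_i(\cdot)$ is non‑decreasing with $\varrho_i\ge 1$ and each $\kappa_i$ is non‑decreasing along the history of $\phi_i$; this monotonicity is the mechanism that will eventually dominate the indefinite couplings. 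The algebraic Riccati equation \eqref{eq:theorem} will be used in the form $PA+A^\top P=2P^2-\mathcal{Q}$ to eliminate the non‑Hurwitz drift of $A$.

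Next I would pick a mode‑independent stochastic Lyapunov function built on a common quadratic form,
\begin{align}\label{eq:lyap-plan}
V(\phi,\varrho,\sigma)=\sum_{i=1}^{N}\xi_i\Big(h(\varrho_i)\,\phi_i^\top P\phi_i+\ell(\varrho_i)\Big),
\end{align}
with positive weights $\xi_i$ chosen from a left eigenvector of the union graph $\bar{\mathcal{G}}$ (Assumption \ref{ass: a directed spanning tree rooted at the leader}) and scalar increasing functions $h,\ell$ — a standard device in fully distributed adaptive consensus for coping with the asymmetry of $\mathcal{M}$ — and then I would fix the free function $\Psi_i$ in \eqref{eq:observer} so that the terms generated by $h'(\varrho_i)\dot\varrho_i$ and by $\kappa_i$ cancel. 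By the generator formula for Markov jump systems, $\mathfrak{L}V=\dot V|_{\mathrm{drift}}+\sum_{q}\mu_{\sigma q}\big(V(\phi,\varrho,q)-V(\phi,\varrho,\sigma)\big)$, where $\dot V|_{\mathrm{drift}}$ is the derivative of $V$ along the continuous flow at frozen mode. Substituting \eqref{eq:stacked-plan} and the Riccati identity, the $A$‑contribution collapses to $\sum_i\xi_i h(\varrho_i)\big(2\|P\phi_i\|^2-\phi_i^\top\mathcal{Q}\phi_i\big)$, the coupling produces $-2\sum_{i,j}\xi_i h(\varrho_i)m_{ij}\kappa_j(P\phi_i)^\top(P\phi_j)$, and the jump term is bounded because $\mathcal{M}(\sigma)$ ranges over a finite set. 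The target of this step is $\mathbb{E}[\mathfrak{L}V]\le-c\,\mathbb{E}[\|\phi\|^2]$ for some $c>0$, from which $\mathbb{E}[V(t)]$ stays bounded and $\int_{0}^{\infty}\mathbb{E}[\|\phi(s)\|^2]\,ds<\infty$ follow.

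The conclusion is then reached in two moves. Boundedness of $\mathbb{E}[V(t)]$ and monotonicity of each $\varrho_i$ force $\mathbb{E}[\varrho_i(t)]$ to be bounded; being non‑decreasing, $\varrho_i$ converges (a.s. and in mean) to a finite limit $\varrho_i^{\infty}$, and $\mathbb{E}\!\int_0^{\infty}\|\phi_i(s)\|^2\,ds=\mathbb{E}[\varrho_i^{\infty}-\varrho_i(0)]<\infty$; together with a uniform bound on $\tfrac{d}{dt}\mathbb{E}[\|\phi\|^2]$ (equivalently, a Barbalat/LaSalle‑type lemma for Markov jump systems) this yields $\mathbb{E}[\|\phi(t)\|^2]\to 0$. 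Finally, since $\theta=(\mathcal{M}(\sigma)^{-1}\otimes I_{n_0})\phi$ and $\|\mathcal{M}(\sigma)^{-1}\|$ is uniformly bounded over the finitely many modes ($\mathcal{M}(\sigma)$ being nonsingular and $\mathbb{E}[\mathcal{M}(\sigma)]$ positive definite by Assumption \ref{ass: a directed spanning tree rooted at the leader}), $\mathbb{E}[\|\theta(t)\|^2]\to 0$, and Jensen's inequality gives $\lim_{t\to\infty}\mathbb{E}[\theta_i(t)]=0$; the convergence $\varrho_i\to\varrho_i^{\infty}$ supplies the statement about the adaptive gains.

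The main obstacle is obtaining the negative generator in the middle step. For a \emph{directed} topology there is in general no single positive diagonal $\Xi=\mathrm{diag}(\xi_i)$ making $\Xi\mathcal{M}_\varphi+\mathcal{M}_\varphi^\top\Xi$ positive semidefinite for every mode $\varphi$, and multiplying by the time‑varying diagonal gain $K$ can destroy even a per‑mode sign property; moreover, under stochastic switching one cannot hope for a mode‑wise decrease of $V$, only a decrease of $\mathfrak{L}V$, so the asymmetric, sign‑indefinite per‑mode coupling must be tamed by combining (i) the high‑gain effect of the monotone gains $\kappa_i$ with (ii) ergodic averaging of $\sigma(t)$, which replaces the effective coupling weight by $\mathbb{E}[\mathcal{M}(\sigma)]=\sum_{\varphi}\pi_\varphi\mathcal{M}_\varphi$, whose symmetric part is positive definite. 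Making these two effects cooperate — so that neither do the gains $\varrho_i$ escape to infinity nor does the quadratic form fail to be dominated — is the technical heart of the proof, and is exactly what pins down the admissible choices of $P$, $\mathcal{Q}$, $\Psi_i$, and the weights $\xi_i,h,\ell$.
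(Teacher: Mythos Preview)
Your high–level strategy (stochastic Lyapunov argument, left–eigenvector weights from the union graph, ergodic averaging over the Markov modes to replace the per–mode coupling by $\sum_\varphi\pi_\varphi\mathcal{M}_\varphi$, the Riccati identity to kill the $A$–drift, and a Barbalat–type closing) coincides with the paper's. The place where your proposal diverges, and where it has a genuine gap, is the Lyapunov candidate. You propose $V=\sum_i\xi_i\big(h(\varrho_i)\,\phi_i^\top P\phi_i+\ell(\varrho_i)\big)$ and plan to \emph{choose} $\Psi_i$ so that the $h'(\varrho_i)\dot\varrho_i$ terms cancel the $\kappa_i$ terms. The paper does the opposite: it keeps $\Psi_i$ completely free (any monotone function with $\Psi_i\ge 1$) and instead builds the Lyapunov function around $\varsigma_i=\phi_i^\top P^{-1}\phi_i$, namely
\[
V=\sum_{i}\frac{h_i}{2}\,(2\varrho_i+\varsigma_i)\!\int_0^{\varsigma_i}\!\Psi_i(s)\,ds\;+\;\frac{\lambda_0\pi_{\min}}{4}\sum_i(\varrho_i-\alpha)^2 .
\]
The point of this seemingly baroque choice is that its time derivative manufactures the factor $(\varrho_i+\varsigma_i)\Psi_i(\varsigma_i)=\kappa_i$ in front of $\phi_i$, so the coupling term becomes $\sum_{i,j}h_i\,\kappa_i\,m_{ij}(\sigma)\,\kappa_j\,\phi_i^\top\phi_j$, i.e.\ a bilinear form with the \emph{same} adaptive gain on both sides. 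After ergodic averaging this reduces to $\Phi^\top\!\big(H\Omega G\,\bar{\mathcal M}\,\Omega G\big)\!\otimes I\,\Phi$, whose symmetric part is controlled by $\lambda_2(H\bar{\mathcal M}+\bar{\mathcal M}^\top H)$ via the left–eigenvector construction; two Young inequalities and the choice of $\alpha$ then yield $\dot V\le \mathbb E\big[\bar\Phi^\top\big(I\otimes(P^{-1}A+A^\top P^{-1}-2I)\big)\bar\Phi\big]\le -\lambda_{\min}\mathbb E[\|\Phi\|^2]$.

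Your quadratic $h(\varrho_i)\,\phi_i^\top P\phi_i$ does not produce this matching: differentiating it yields $h(\varrho_i)$ on the $i$–side and $\kappa_j=(\varrho_j+\varsigma_j)\Psi_j(\varsigma_j)$ on the $j$–side, and since $\kappa_j$ depends on $\varsigma_j$ (not just $\varrho_j$) no choice of $h(\cdot)$ alone can symmetrize the weighted coupling. Trying to fix this by constraining $\Psi_i$ defeats the ``fully distributed'' flavor of the result and is not what the theorem asserts. In short, the missing idea is to let the Lyapunov function carry the integral $\int_0^{\varsigma_i}\Psi_i(s)\,ds$ and the multiplicative factor $(2\varrho_i+\varsigma_i)$ so that differentiation reproduces $\kappa_i$; once you have that, the rest of your outline (ergodic averaging, Riccati, Jensen, $\theta=(\mathcal M(\sigma)^{-1}\otimes I)\phi$) goes through essentially as you wrote it.
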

\begin{proof}Consider the following candidate Lyapunov function for system \eqref{eq:observer}:
\begin{align}\label{eq:11}
    V=&\sum_{\varphi=1}^{\iota}V_{\varphi}=\sum_{\varphi=1}^{\iota}\mathbb{E}\left[{\left({\sum_{i=1}^{N}\frac{h_i}{2}(2\varrho_{i}+\varsigma_{i})\int^{\varsigma_{i}}_{0}\Psi_i(s)ds}\right.}\right.\nonumber\\
&\left.{\left.{\qquad\qquad\,+\frac{\lambda_{0}\pi_{\text{min}}}{4}\sum^{N}_{i=1}\tilde{\varrho}_i^2}\right)\times\textbf{1}_{\{\sigma(t)=\varphi\}}}\right],
\end{align}
where $\textbf{1}_{\{\sigma(t)=\varphi\}}$ stands for the Dirac measure over the set $\{\sigma(t)=\varphi\}$. $h_i$ is a positive constant and $\hat{h}=\left[h_{1},h_{2},\cdots ,h_{N}\right]^T$ is a left-eigenvector associated with eigenvalue of matrix $\bar{\mathcal{M}}$. $\bar{\mathcal{M}}$ corresponds to the union graph $\bar{\mathcal{G}}$. Define $H\triangleq \text{diag}\left(h_{1},h_{2},\cdots,h_{N}\right)$, $\lambda_{0}\triangleq\frac{\lambda_{2}\left(H\bar{\mathcal{M}}+\bar{\mathcal{M}}^{T}H\right)}{N}$, $\lambda_{2}(\cdot)$ represents the smallest nonzero eigenvalue of the matrix. $ \pi_{\text{min}}\triangleq \text{min}^{\iota}_{\varphi=1}\pi_\varphi$ and $\tilde{\varrho_i}=\varrho_i-\alpha$, where parameter $\alpha$ is a positive constant to be determined later. The function $V_\varphi$ is positive definite, and it follows from Lemma 3.6 in \cite{do2012continuous} that 
\begin{align}\label{eq:14}
\nonumber
dV_{\varphi}=&\mathbb{E}\left[\sum^{N}_{i=1}\frac{h_i}{2}\left(2\dot{\varrho}_{i}+\dot{\varsigma}_{i}\right)\int^{\varsigma_{i}}_{0}\Psi_i(s)ds \times\textbf{1}_{\{\sigma(t)=\varphi\}}\right]dt\\\nonumber&+\mathbb{E}\left[\sum^{N}_{i=1}\frac{h_i}{2}\left(2\varrho_{i}+\varsigma_{i}\right)\Psi_i\dot{\varsigma}_{i}  \times\textbf{1}_{\{\sigma(t)=\varphi\}}\right]dt\\\nonumber&+\mathbb{E}\left[\frac{\lambda_0\pi_{\text{min}}}{2}\sum^{N}_{i=1}\tilde{\varrho}_i\dot{\varrho}_i \times\textbf{1}_{\{\sigma(t)=\varphi\}}\right]dt\\&+\sum^\iota_{q=1}\mu_{q\varphi}V_{q}dt+o(dt).
\end{align}
\par Considering the fact that $\mu\cdot\textbf{1}=\textbf{0}$, the derivative of $V$ can be expressed as $\dot{V}=\sum^\iota_{\varphi=1}\dot{V}_\varphi$.
\par For $i\in\mathbb{N}$, the monotonically increasing functions $\Psi_i$ fulfill $\Psi_i(s)\geq1$ and $s>0$, using the mean value theorem for integrals, one can obtain that
\begin{align}\label{eq:15}
\nonumber
&\quad\mathbb{E}\left[\sum^{N}_{i=1}\frac{h_i}{2}\left(2\dot{\varrho}_{i}+\dot{\varsigma}_{i}\right)\int^{\varsigma_{i}}_{0}\Psi_i(s)ds \,\textbf{1}_{\{\sigma(t)=\varphi\}}\right]\\&\leq\mathbb{E}\left[\sum^N_{i=1}h_i\varsigma_i\Psi_i\phi_i^T\Gamma\phi_i\,\textbf{1}_{\{\sigma(t)=\varphi\}}\right]\nonumber\\&\quad
+\mathbb{E}\left[\sum^N_{i=1}\frac{h_i}{2}\dot{\varsigma_i}\varsigma_i\Psi_i\,\textbf{1}_{\{\sigma(t)=\varphi\}}\right].
\end{align}

  \par Using \eqref{eq:15} and \eqref{eq:update of relative observation error} yields
    \begin{align}\label{eq:16}
  \nonumber
\dot{V}&\leq\sum^\iota_{\varphi=1}\mathbb{E}\left[\sum^N_{i=1}h_i\varsigma_i\Psi_i\phi_i^T\Gamma\phi_i\,\textbf{1}_{\{\sigma(t)=\varphi\}}\right]\nonumber
\\&\quad+\sum^\iota_{\varphi=1}\mathbb{E}\left[\sum^N_{i=1}2h_i(\varrho_i+\varsigma_i)\Psi_i\phi_i^T\Upsilon A\phi_i\,\textbf{1}_{\{\sigma(t)=\varphi\}}\right]\nonumber
\\&\quad-\sum^\iota_{\varphi=1}\mathbb{E}\Bigg[2h_i(\varrho_i+\varsigma_i)\Psi_i\phi_i^T\Upsilon\nonumber
\\&\quad\times\bigg(\sum^N_{j=1}m_{ij}\kappa_j(t)P\phi_j\bigg)\,\textbf{1}_{\{\sigma(t)=\varphi\}}\Bigg]\nonumber
\\&\quad+\sum^\iota_{\varphi=1}\mathbb{E}\left[\frac{\lambda_0\pi_\text{min}}{2}\sum^N_{i=1}\tilde{\varrho}_i\phi_i^T\Gamma\phi_i\,\textbf{1}_{\{\sigma(t)=\varphi\}}\right].
  \end{align}
\par Given that $\sigma(t)$ has a unique distribution equal to $\pi$, by Lemma 1 from \cite{wang2022distributed}, it follows that 
\begin{align}\label{eq:17}
\nonumber
&\sum^\iota_{\varphi=1}\mathbb{E}\Bigg[2h_i(\varrho_i+\varsigma_i)\Psi_i\phi_i^T\Upsilon \sum^N_{j=1}m_{ij}\kappa_j(t)P\phi_j\,\textbf{1}_{\{\sigma(t)=\varphi\}}\Bigg]\nonumber
 \\\geq&\lambda_0\pi_{\text{min}}\mathbb{E}\left[\Phi^T\left(\Omega^2G^2\right)\otimes\left(\Upsilon P\right)\Phi\right],
\end{align}
 where diagonal matrix $\Omega=\text{diag}\left(\varrho_1+\varsigma_1,\cdots,\varrho_N+\varsigma_N\right)$ and $G=\text{diag}\left(\Psi_1,\cdots,\Psi_N\right)$, and $\Phi=[\phi_1,\cdots,\phi_N]^T$. $\otimes$ means Kronecker product.
If $\Gamma=\Upsilon P$ is true, then substituting \eqref{eq:17} into \eqref{eq:16} and using the Young's inequality, one has
\begin{align}\label{eq:18}
  \nonumber
\dot{V}&\leq\mathbb{E}\Bigg[\sum^N_{i=1}h_i(\varrho_i+\varsigma_i)\Psi_i\phi_i^T\left(\Upsilon A+A^T\Upsilon\right)\phi_i\Bigg]\nonumber
\\&\quad-\lambda_0\pi_{\text{min}}\mathbb{E}\left[\sum^N_{i=1}\left(\left(\varrho+\varsigma\right)^2\Psi_i^2-\frac{\varrho_i}{2}-\frac{\varsigma_i^2}{2}\Psi_i^2\right)\phi_i^T\Gamma\phi_i\right]\nonumber
\\&\quad-\mathbb{E}\Bigg[\sum^N_{i=1}\left(\frac{\lambda_0\pi_{\text{min}}}{2}\alpha-\frac{h_i^2}{2\lambda_0\pi_{\text{min}}}\right)\phi_i^T\Gamma\phi_i\Bigg].
 \end{align}
\par Noting that $\Psi_i\geq1$ and $\varrho_{i}\geq1$, one has
\begin{align}\label{eq:19}
\frac{\varrho_i}{2}+\frac{\varsigma_i^2}{2}\Psi_i^2\leq(\frac{\varrho_i^2}{2}+\frac{\varsigma_i^2}{2}+\varrho_i\varsigma_i)\Psi_i^2=\frac{1}{2}(\varrho_i+\varsigma_i)^2\Psi_i^2.
 \end{align}
 \par Let $\alpha\geq\hat{\alpha}+\text{max}_{i=1}^N\frac{h_i^2}{(\lambda_0\pi_{\text{min}})^2}$, $\hat{\alpha}$ will be determined later. Substituting \eqref{eq:19} into \eqref{eq:18}, it follows that
 \begin{align}\label{ea:20}
\dot{V}&\leq\mathbb{E}\Bigg[\sum^N_{i=1}h_i(\varrho_i+\varsigma_i)\Psi_i\phi_i^T\left(\Upsilon A+A^T\Upsilon\right)\phi_i\Bigg]\nonumber
\\&\quad-\frac{\lambda_0\pi_{\text{min}}}{2}\mathbb{E}\left[\sum^N_{i=1}\left(\left(\varrho_i+\varsigma_i\right)^2\Psi_i^2+\hat{\alpha}\right)\phi_i^T\Gamma\phi_i\right].
 \end{align}
 \par Using the Young's inequality and selecting $\hat{\alpha}$ satisfying $\sqrt{\hat{\alpha}}\lambda_0\pi_{\text{min}}\geq2\text{max}^N_{i=1}h_i$, one has
 \begin{align}\label{eq:21}
&\quad\frac{\lambda_0\pi_{\text{min}}}{2}\mathbb{E}\left[\sum^N_{i=1}\left(\left(\varrho_i+\varsigma_i\right)^2\Psi_i^2+\hat{\alpha}\right)\phi_i^T\Gamma\phi_i\right]\nonumber
\\&\geq\lambda_0\pi_{\text{min}}\mathbb{E}\left[\sum^N_{i=1}\sqrt{\hat{\alpha}}\left(\varrho_i+\varsigma_i\right)\Psi_i\phi_i^T\Gamma\phi_i\right]\nonumber
\\&\geq\mathbb{E}\left[2\sum^N_{i=1}h_i\left(\varrho_i+\varsigma_i\right)\Psi_i\phi_i^T\Gamma\phi_i\right].
 \end{align}
 \par Let $\bar{\Phi}=\left(\sqrt{\Omega GH}\otimes I_{n_0}\right)\Phi$, then substituting \eqref{eq:21} into \eqref{ea:20} gives  
 \begin{align}\label{eq:22}
\dot{V}\leq\mathbb{E}\left[\bar{\Phi}^T\left(I_N\otimes\left(\Upsilon A+A^T\Upsilon-2\Gamma\right)\right)\bar{\Phi}\right].
 \end{align}
\par Choose  $\Gamma=I_{n_0}$ and $\Upsilon=P^{-1}$ and consider that
\begin{align}\label{eq:23}
P^{-1}A+A^TP^{-1}-2I_{n_0}<0,
\end{align}
where the inequality \eqref{eq:theorem} is derived by multiplying both sides of \eqref{eq:23} by $P$, it follows that
\begin{align}\label{eq:24}
\dot{V}\leq-\lambda_{\text{min}}\mathbb{E}\left[\|\Phi\|^2\right]\leq0,
\end{align}
 where let $\lambda_{\text{min}}$ be the smallest eigenvalue of the matrix $-\left(\Omega GH\otimes\left(P^{-1}A+A^TP^{-1}-2I_{n_0}\right)\right)$. Hence, the Lyapunov function $V$ is bounded. 
 \par Assume that $\lim_{t\to+\infty}\mathbb{E}\left[\|\Phi\|^2\right]=\xi$ and $\xi>0$, there exists a time instant $t^*$ such that $\mathbb{E}\left[\|\Phi\|^2\right]>\frac{\xi}{2}$ when $t\geq t^*$.
 \par By integrating \eqref{eq:25} from $t^*=0$ to $t=+\infty$, we can obtain that
 \begin{align}\label{eq:25}
+\infty=\int^{+\infty}_{t^*}\frac{\xi}{2}dt&<\int^{+\infty}_{t^*}\mathbb{E}\left[\|\Phi\|^2\right]dt\nonumber\\&<\frac{1}{\lambda_{\text{min}}}\left(V\left(t^*\right)-V\left(+\infty\right)\right)=-\infty.
 \end{align}
 \par This provides a contradiction-based proof. Thus $\xi=0$, then we can obtain $\lim_{t\to+\infty}\mathbb{E}\left[\|\Phi\|^2\right]=0$. Furthermore, considering Jensen’s
inequality, one has $\mathbb{E}\left[\|\Phi^2\|\right]\geq\left(\mathbb{E}\left[\|\Phi\|\right]\right)^2$. Then we can derive $\lim_{t\to+\infty}\mathbb{E}\left[\|\Phi\|\right]=0$. Considering \eqref{eq:relative observation error}, it is obtained that $\lim_{t\to+\infty}\mathbb{E}\left[\theta_i\right]=0$ and $\lim_{t\to+\infty}\mathbb{E}\left[\varrho_i\right]=0$. Therefore, the observer states in \eqref{eq:observer} can achieve consensus in the mean square sense. The proof is completed.
\end{proof}
\begin{remark}
Compared with \cite{wang2024cyber}, quadratic extra gain $\varsigma_i(t)$ is introduced to assist the derivation to adapt to the directed communication topology \cite{liu2024time}. Inspired by \cite{wang2022distributed}, an adaptive coupling gain $\Psi_i(\varsigma_{i}(t))$ is introduced to accelerate the convergence of observer states.  We constructed the Lyapunov function \eqref{eq:11} using  $\varsigma_i(t)$ and $\Psi_i(\varsigma_{i}(t))$, and provided a sufficient condition \eqref{eq:theorem} in Theorem \ref{thm: convergence of observer}. Based on the proof of Theorem \ref{thm: convergence of observer}, we note that the selection of  $\Psi_i(\varsigma_{i}(t))$ offers flexibility and can be determined experimentally.
\end{remark}
\subsection{Distributed Model Predictive Control Algorithm}
\par To model the DMPC framework (see Fig.  \ref{fig:principle_scheme}), the Euler method is used to discretize system dynamics \eqref{eq:continuous linear follower model} and \eqref{eq:continuous leader model} with sampling time $\Delta t$ as follows, which is also used in \cite{wang2025string}:
 \begin{align}
&x_i(k+1)=A_dx_i(k)+B_du_i(k),\nonumber\\
     &A_d=\begin{bmatrix}1&\Delta t&0\\0&1&\Delta t\\0&0&1\end{bmatrix},B_d=\begin{bmatrix}0\\0\\\Delta t\end{bmatrix},\label{eq:discrete follower model}\\
     &x_0(k+1)=A_dx_0(k).\label{eq:discrete leader model}
 \end{align}
\par The control strategy for the DMPC framework is shown in Fig. \ref{fig:principle_scheme}. At time $t$, vehicle $i$ obtains the assumed trajectories of its predecessor and the assumed observation trajectories (generated by \eqref{eq:discrete leader model}) of its in-neighbors, and then uses this information, along with its own assumed trajectory and assumed observation trajectory, to solve the optimal control problem. The first control input from the optimal solution is applied to control vehicle $i$ at time $t$, while the remaining control inputs are combined with the updated terminal control input to generate the assumed trajectory for time $t+1$.  $N_p$ is the predictive horizon. For arbitrary integers $N_1$, $N_2$ satisfying $N_1<N_2$, define $\mathbb{K}_{N_1:N_2}=\{N_1,N_1+1,\cdots,N_2-1,N_2\}$. We denote $\mathbb{K}_{0:N_p-1}=\mathbb{K}_1$ and $\mathbb{K}_{0:N_p}=\mathbb{K}_2$. Three types of state trajectories and control input trajectories are defined as follows:
\par i) The predicted trajectory $x_i^p(k|t),k\in\mathbb{K}_{2}$ and $u_i^p(k|t),k\in\mathbb{K}_{1}$, the variables to be optimized; 
\par ii) The optimal trajectory $x_i^*(k|t),k\in\mathbb{K}_{2}$ and $u_i^*(k|t),k\in\mathbb{K}_{1}$, the optimal solution of the local optimization problem; 
\par iii) The assumed trajectory $x_i^a(k|t),k\in\mathbb{K}_{2}$ and $u_i^a(k|t),k\in\mathbb{K}_{1}$, the parameters generated with the optimal trajectories.
\par Additionally, we denote by $\vartheta_i^a(k|t),k\in\mathbb{K}_{2}$ the assumed observation trajectory. Based on the observation of the leader, we can design an optimization problem $\mathcal{P}_i(t)$ for vehicle $i$ at time $t$ as follows:
\begin{subequations}
\begin{align}
&\min_{u_i^p(k|t),k\in\mathbb{K}_{1}}J_i\left(x_i^p,u_i^p,x_i^a,\vartheta_i^a,\vartheta_j^a,x_{i-1}^a;\mathbb{K}_{1}|t\right)\nonumber\\
&=\sum_{k=0}^{N_p-1}l_i\left(x_i^p,u_i^p,x_i^a,\vartheta_i^a,\vartheta_j^a,x_{i-1}^a;k|t\right)\label{eq:28a}\\
\
&s.t.\qquad x_i^p(0|t)=x_i(t),\label{eq:28b}\\
&x_i^p(k+1|t)=A_dx_i^p(k|t)+B_du_i^p(k|t),k\in\mathbb{K}_{1},\label{eq:28c}\\
&\qquad u_i^p(k|t)\in\mathbb{U}_i,k\in\mathbb{K}_{1},\label{eq:28d}\\
&\qquad x_i^p(N_p|t)=x_i^a(N_p|t),\label{eq:28e}\\
&\qquad\left\|\gamma\hat{e}_i^p(k|t)\right\|_{\infty} \leq \beta_{i}D_{i-1}(t),k\in\mathbb{K}_{2},\label{eq:28f}
\end{align}
\end{subequations}
where $\mathbb{U}_i$ is the control input set. \eqref{eq:28b}, \eqref{eq:28c}, and \eqref{eq:28d} represent the constraint of the initial state, vehicle dynamics, and control input, respectively.  \eqref{eq:28e} is the terminal state constraint and \eqref{eq:28f} is the observer-based string stable constraint.
\par In constraint \eqref{eq:28f},  $\gamma=[1,0,0]$. To obtain more accurate observations of the leader, we define the average observation as $\vartheta_{i,avg}(t)=\frac{1}{|\mathbb{I}_i(\sigma(t))|+1}\left[\vartheta_i(t)+\sum_{j\in\mathbb{I}_i(\sigma(t))}\vartheta_j(t)\right]$, 
where $|\mathbb{I}_i(\sigma(t))|$ is cardinality of $\mathbb{I}_i(\sigma(t))$. We denote  $\hat{e}_i(t)=x_i(t)-\vartheta_{i,avg}(t)+\tilde{d}_{i0}$. Then we define $D_{i-1}=\text{max}\left\{\|\gamma \hat{e}_{i-1}(\tau)\|_\infty, \|\gamma \hat{e}^a_{i-1}(k|t)\|_\infty\right\}$, 
where $\|\gamma \hat{e}_{i-1}(\tau)\|_\infty=\text{max}_{0<\tau\leq t}|\gamma \hat{e}_{i-1}(\tau)|$ represents the max spacing error of preceding vehicle $i-1$ in the historical trajectory and $\|\gamma \hat{e}_{i-1}^a(k|t)\|_\infty=\text{max}_{k\in\mathbb{K}_2}|\gamma \hat{e}_{i-1}^a(k|t)|$ represents the max spacing error of preceding vehicle $i-1$ in the assumed trajectory. The cost function is defined as:
\begin{align}\label{eq:cost}
  & l_i(x_i^p,u_i^p,x_i^a,\vartheta_i^a,\vartheta_j^a,x_{i-1}^a;k|t)\nonumber\\
=&\left\|u_i^p(k|t)\right\|_{R_i}+\left\|x_i^p(k|t)-x_i^a(k|t)\right\|_{F_i}\nonumber\\
  & +\left\|x_i^p(k|t)-x_{i-1}^a(k|t)+\tilde{d}_0\right\|_{S_i}\nonumber\\
  &+\left\|x_i^p(k|t)-\vartheta_{i,avg}^a(k|t)+\tilde{d}_{i0}\right\|_{G_i},
\end{align}
where $R_i$, $F_i$, $S_i$, and $G_i$ are positive definite diagonal matrices, representing the weight of the corresponding penalty term. $\left\|u_i^p(k|t)\right\|_{R_i}$ is the penalty for the control input; $\left\|x_i^p(k|t)-x_i^a(k|t)\right\|_{F_i}$ is the penalty for the deviation between the predicted and assumed states of vehicle $i$; $\left\|x_i^p(k|t)-x_{i-1}^a(k|t)+\tilde{d}_0\right\|_{S_i}$ is the penalty for the deviation between vehicle $i$'s predicted states and the preceding vehicle $i-1$'s assumed states with offset $\tilde{d}_0=[d_0,0,0]^T$. When $i=1$, the preceding vehicle is the leader; $\left\|x_i^p(k|t)-\vartheta_{i,avg}^a(k|t)+\tilde{d}_{i0}\right\|_{G_i}$ is the penalty for the deviation between vehicle $i$'s predicted states and average observation states with offset $\tilde{d}_{i0}$. Here we denote by $\vartheta_{i,avg}^a(k|t),k\in\mathbb{K}_{2}$ the average assumed observation trajectory.
\par From \eqref{eq:28f}, we derive $x_i^*(N_p|t)=x_i^a(N_p|t)$. The update law is formulated as:
\begin{align}\label{eq:terminal state update}
x_i^a(N_p|t+1)=A_dx_i^a(N_p|t)+B_du^*_i(N_p|t),
\end{align}
where $x_i^a(N_p|t+1)$ represents the assumed state of $N_p$ at time $t+1$ looking forward and $u_i^*(N_p|t)=K\cdotp\left(\vartheta_{i,avg}^a(N_p|t)-x_i^a(N_p|t)-\tilde{d}_{i0}\right)$. We note that $u_i^*(N_p|t)$ is defined only for notational simplicity, and the control gain matrix $K$ can be determined by the modified algebraic Riccati equation in \cite{bian2020distributed}.

\begin{remark}   
The transmission of the assumed trajectory is based on the PF (predecessor-following) communication topology, and the switching topology may lead to the loss of predecessor $i-1$'s information,  in which the vehicles can rely on additional observer information to ensure safety.
\end{remark}

\begin{remark}
If the conditions of Theorem \ref{thm: convergence of observer} are satisfied, the state of the observer of each follower converges to the leader's state. Then, the terminal constraints \eqref{eq:28e} and the modified update law \eqref{eq:terminal state update} guarantee recursive feasibility and terminal consensus \cite{bian2020distributed}. Furthermore, the modified string stability constraints ensure that the maximum predicted spacing error will not exceed the maximum spacing error of the predecessor \cite{wang2023dmpc}. Using the control algorithm in Fig. \ref{fig:principle_scheme}, the control objectives \eqref{eq:Mean-square closed-loop stability} and \eqref{eq:Predecessor-follower string stability} can be achieved under Markovian switching topology.
\begin{remark}
Due to the switching communication topology, the adjacent nodes in the topology are constantly changing, resulting in the changing information that each vehicle receives from other vehicles. We use an observer for each follower to obtain the average observation of the leader for DMPC control. This information is not affected by the time-varying communication topology. Therefore, our approach ensures robustness under dynamic communication topologies.
\end{remark}
\end{remark}
\begin{figure}[b]
\vspace{-0.5cm}
\includegraphics[width=0.85\linewidth]{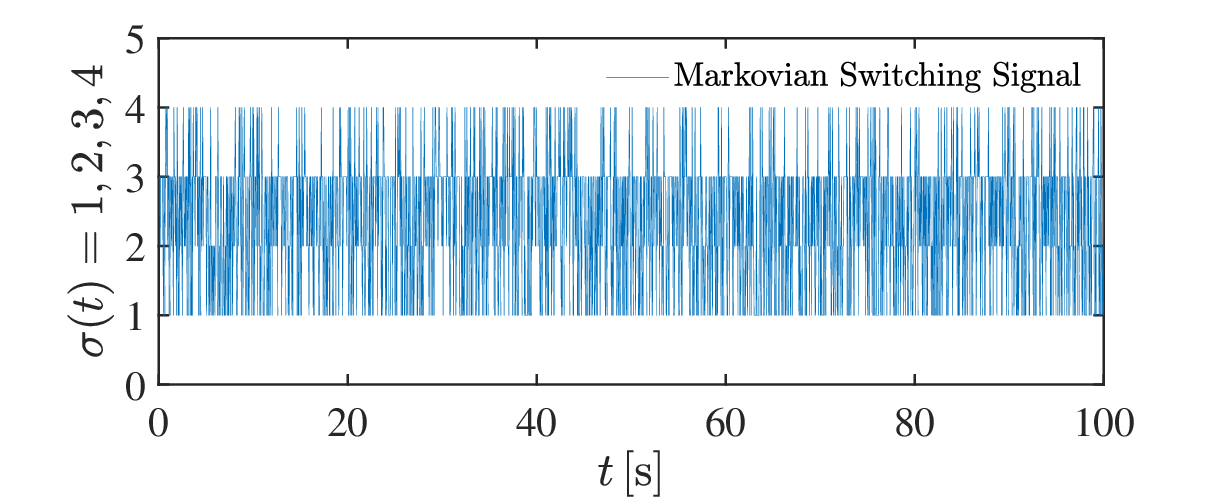}
\centering
\caption{Diagram of Markovian switching communication topologies. There are four types of communication topologies, i.e., $\sigma(t)=1,2,3,4$.}
\label{fig:sigma}
\end{figure}

\section{Numerical Experiments}\label{Numerical Experiments}
\subsection{Simulation Settings}
\par We consider a vehicle platoon with one leader and $N=5$ followers. The initial state of followers is given by $x_i(0)=[p_i(0),v_i(0),a_i(0)]^T=[i\cdotp d_0,0,0]^T$. The velocity of the leader can be defined as follows:
\begin{align}\label{desired velocity}
   v_0(t)=
   \begin{cases}
        t,\qquad& 0s\leq t<25s\\
        25,&25s\leq t<50s\\
        25-1.2(t-50),&50s\leq t<60s\\
        13,&60s\leq t<100s
    \end{cases}\; m/s.
\end{align}
\par In the simulation, we set the sampling time $\Delta t=0.1s$, the desired distance $d_0=20m$, and the control input bound $\mathbb{U}_i=[-3,3]m/s^2$. The predictive horizon is $N_p=10$. The weights are: $R_i=0.1$, $S_i=\text{diag}([5,2.5,1])$, $G_i=\text{diag}([50,25,10])$, $i=1,2,\cdots,5$, $F_i=\text{diag}([5,2.5,1])$, $i=1,2,3,4$, $F_5=\text{diag}([0,0,0])$. The control gain matrix $K=[1.66;5.39;2.42]$ and the scaling parameter $\beta_i=0.6$. For fairness, the benchmark controller in \cite{bian2020distributed} was evaluated using the same simulation parameters. 

\par We assume the communication topology switches with four possible topologies: $\mathcal{G}_1$: LPF (leader-predecessor following), $\mathcal{G}_2$: LPF-failure, $\mathcal{G}_3$: PF, $\mathcal{G}_4$: PF-failure, also used in \cite{li2020distributed,wang2024cyber}. In the PF-failure mode, the communication channel between vehicle 2 and 3 is broken, while in the PLF-failure, the communication channel between the leader and the vehicle 4 and 5 is broken, leading to system instability.
The union graph $\bar{\mathcal{G}}=\left\{\mathcal{G}_1,\mathcal{G}_2,\mathcal{G}_3,\mathcal{G}_4\right\}$ satisfies Assumption  \ref{ass:continuous-time Markov process} and \ref{ass: a directed spanning tree rooted at the leader}. Refer to \cite{wang2024cyber}, the transition rate matrix is selected as $\mu=
\begin{bmatrix}
-2 & 0.8 & 0.8 & 0.4 \\
1.2 & -2.4 & 0.8 & 0.4 \\
0.4 & 0.4 & -1.2 & 0.4 \\
1.2 & 0.8 & 0.8 & -2.8
\end{bmatrix}$. The corresponding invariant distribution is $\pi=[11/40,1/5,2/5,1/8]$, which indicates that the switching process is rapid and smooth, with $\sigma(t)=\left\{1,2,3,4\right\}$. According to the condition in Theorem \ref{thm: convergence of observer}, the matrix  $P=\begin{bmatrix}
1.5602&0.2230&0.0159\\
0.2230&1.6081&0.2275\\
0.0159&0.2275&1.6246
\end{bmatrix}$ and we choose function $\Psi_i(\varsigma_i(t))=(1+\varsigma_i(t))^\frac{1}{4}$ as the adaptive coupling gain.
To enable quantitative comparison, the measures of effectiveness (MOE) \cite{wang2023dmpc} is employed. Specifically, for all followers in the platoon, the max position error (MPE) and the max velocity error (MVE) quantify the maximum absolute errors in position and velocity; the average position error (APE) and the average velocity error (AVE) represent the average absolute error in position and velocity.


\subsection{Simulation Results}
\par Fig.~\ref{fig:sigma} depicts the switching behavior among four communication topologies. The observation error profiles between the leader and each observer are illustrated in Fig. \ref{fig:observer_mainfig}(a)-(c). The sharp changes in the curve are caused by variations in the velocity and acceleration of the leader vehicle. The observation errors $\theta_{p,i}$, $\theta_{v,i}$, and $\theta_{a,i}$ almost get close to zero. The adaptive parameter $\|\kappa_i(t)\|$ converges to finite positive values, as shown in Fig. \ref{fig:observer_mainfig}(d). This demonstrates the effectiveness and robustness against dynamic switching topologies. Fig. \ref{fig:tracking_mainfig} shows the tracking profiles of the proposed controller. From Fig.  \ref{fig:tracking_mainfig}(a), the followers’ position profile can track the leader’s position profile, and their trajectories do not exhibit any intersections, indicating no collisions between vehicles. Fig. \ref{fig:tracking_mainfig}(b) shows that the velocity profile can quickly track the desired velocity profile despite disturbances from the leader and switching communication topology.
\begin{figure}[!t]
    \centering
    \begin{subfigure}{0.48\linewidth}
        \includegraphics[width=\linewidth]{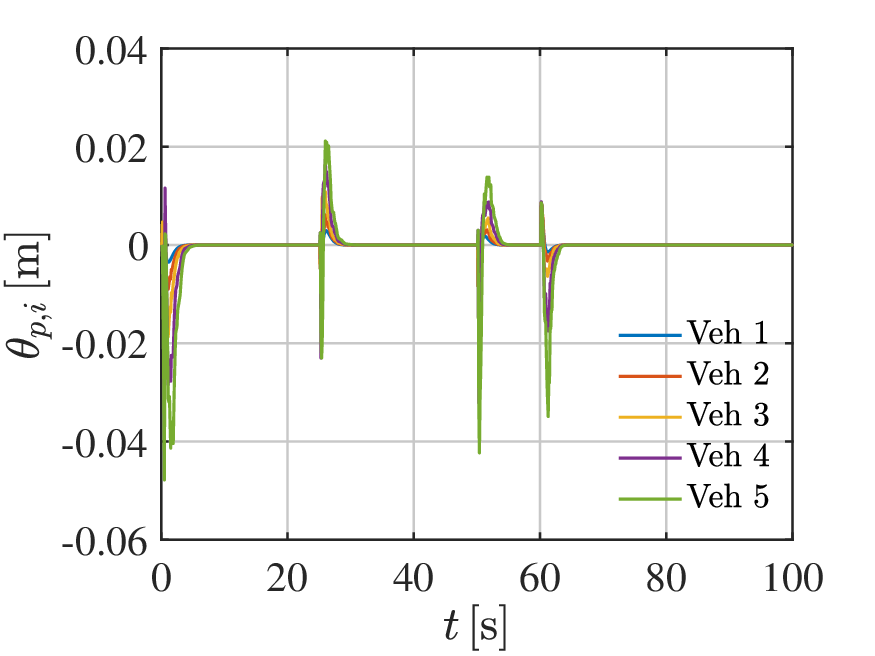}
        \caption{Position observation error.}
        \label{fig:subfig_p_tilde}
    \end{subfigure}
    \hfill
    \begin{subfigure}{0.48\linewidth}
        \includegraphics[width=\linewidth]{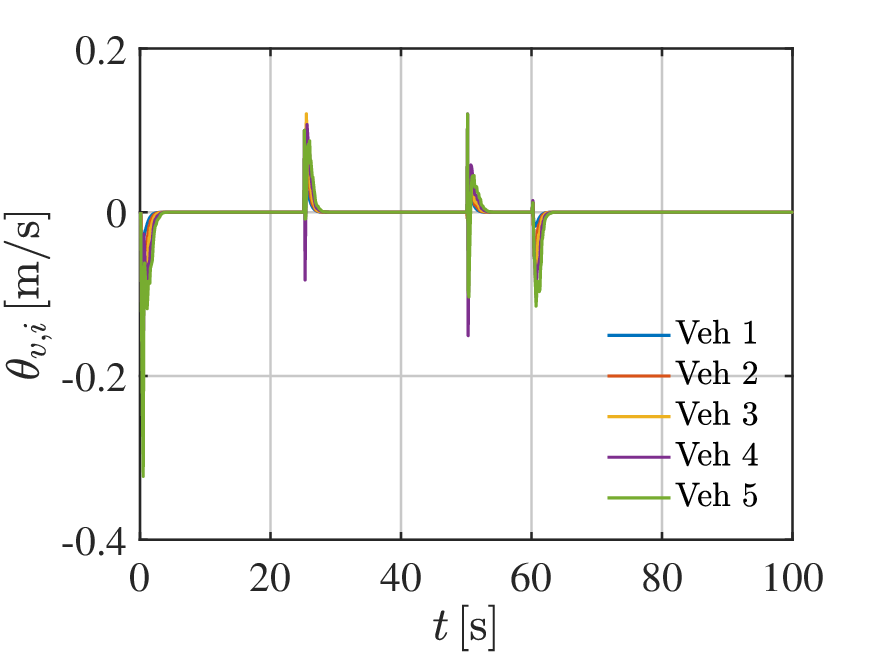}
        \caption{Velocity observation error.}
        \label{fig:subfig_v_tilde}
    \end{subfigure}
    \vfill
    \begin{subfigure}{0.48\linewidth}
        \includegraphics[width=\linewidth]{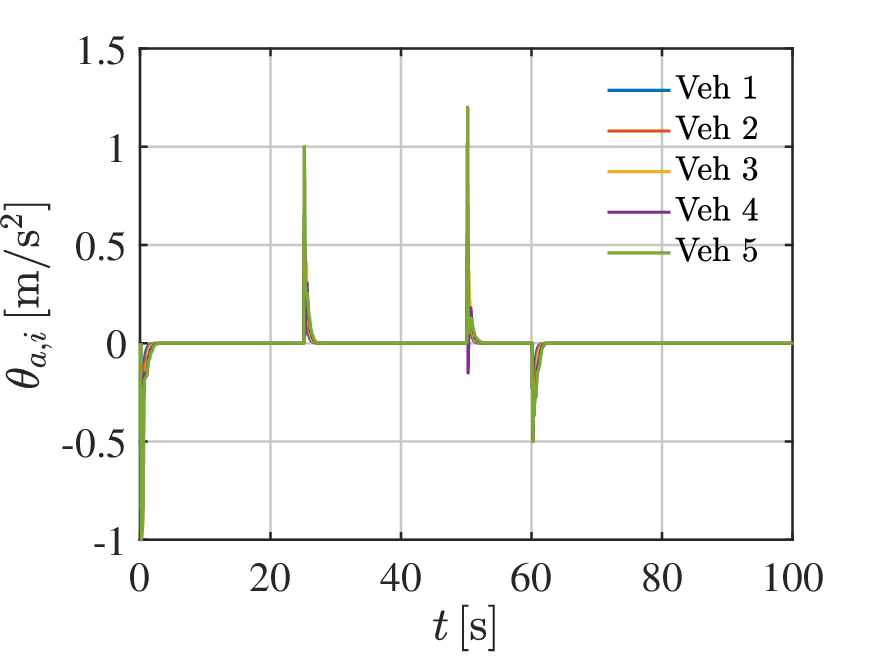}
        \caption{Acceleration observation error.}
        \label{fig:subfig_a_tilde}
    \end{subfigure}
    \hfill
    \begin{subfigure}{0.48\linewidth}
        \includegraphics[width=\linewidth]{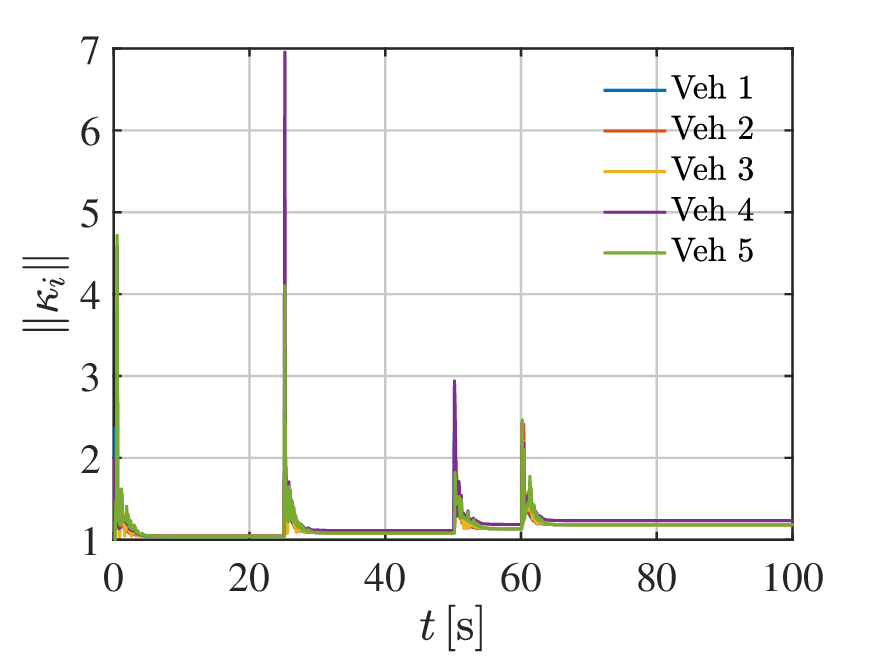}
        \caption{Adaptive parameter.}
        \label{fig:subfig_kappa}
    \end{subfigure}
    \caption{Profiles of observation errors between the leader and observers with Markovian switching topology and the evolution of the adaptive parameter.}
    \label{fig:observer_mainfig}
    \vspace{-0.5cm}
\end{figure}

\begin{figure}[!t]
    \centering
    \begin{subfigure}{0.48\linewidth}
        \includegraphics[width=\linewidth]{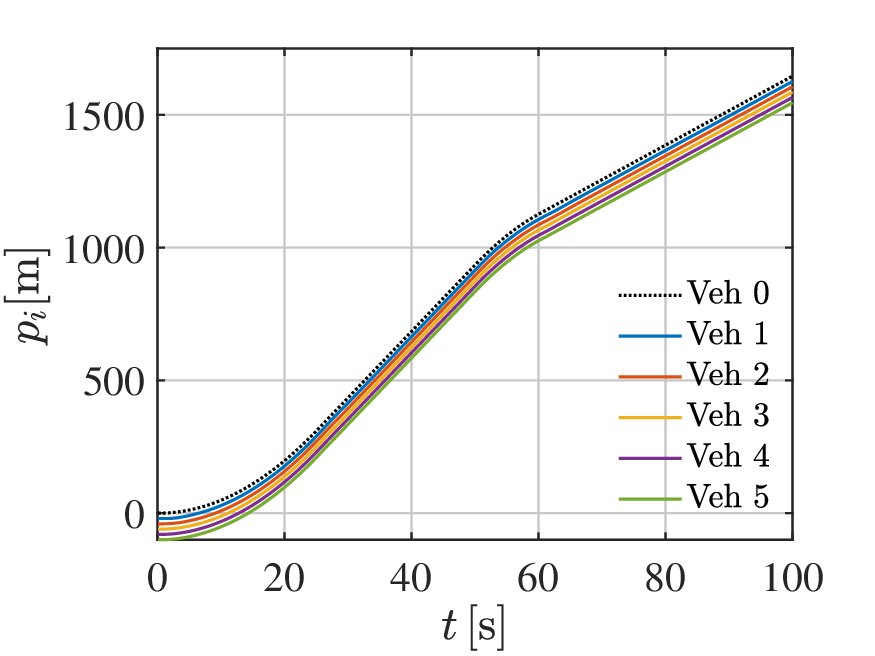}
        \caption{Position.}
        \label{fig:subfig_p}
    \end{subfigure}
    \hfill
    \begin{subfigure}{0.48\linewidth}
        \includegraphics[width=\linewidth]{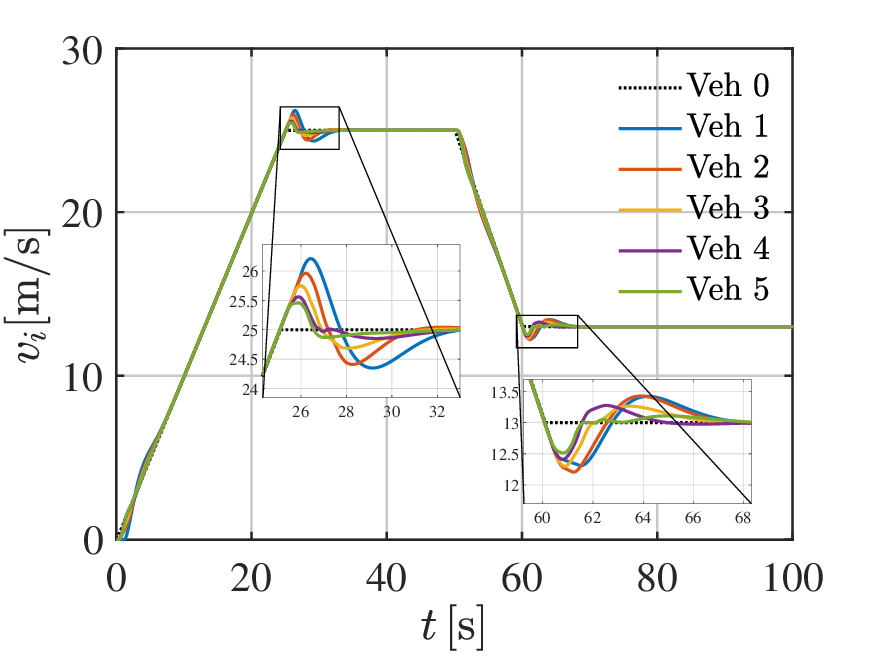}
        \caption{Velocity.}
        \label{fig:subfig_v}
    \end{subfigure}
    \caption{Tracking profiles of the proposed controller.}
    \label{fig:tracking_mainfig}
    \vspace{-0.8cm}
\end{figure}

\begin{table}[!t]
\vspace{0.5cm}
\renewcommand{\arraystretch}{1.5}
\centering
\caption{Performance Metrics of Controller.} 
\label{tab: The MOE of controller} 
\begin{tabular}{ccccc} 
\toprule 
\ &MPE [m]&MVE [m/s]&APE [m]&AVE [m/s]\\
\midrule
Our method&\textbf{1.83}&\textbf{1.21}&\textbf{0.13}&\textbf{0.07}\\
Method in \cite{bian2020distributed}&2.62&1.62&0.29&0.18\\
\bottomrule 
\end{tabular}
\vspace{-0.5cm}
\end{table}
The evaluation metrics are shown in Table \ref{tab: The MOE of controller},  and our method has smaller errors than \cite{bian2020distributed}, demonstrating the competitiveness of the proposed controller. The tracking error and control input are shown in Fig. \ref{fig:controller_mainfig}. As shown in Fig. \ref{fig:controller_mainfig} (a), (c), and (e), the proposed controller enables all followers to effectively track the leader, resulting in significantly smaller position, velocity, and acceleration errors than the benchmark controller (see Fig. \ref{fig:controller_mainfig} (b), (d), and (f)).  Additionally, the tracking errors of the proposed controller converge faster. In contrast, there are more frequent oscillations in the control inputs of the benchmark controller due to switching topologies, as shown in Fig.  \ref{fig:controller_mainfig} (h). The control input response of our controller in Fig.  \ref{fig:controller_mainfig} (g) is faster, with fewer instances of jitter. Moreover, the benchmark controller does not satisfy the string stability, see Fig. \ref{fig:controller_mainfig} (b). Our method in Fig. \ref{fig:controller_mainfig} (a) shows that the peak magnitude of the position error is not amplified along the platoon, showing predecessor-follower string stability.

\begin{figure}[t]
    \centering
    \begin{subfigure}{0.48\linewidth}
        \includegraphics[width=\linewidth]{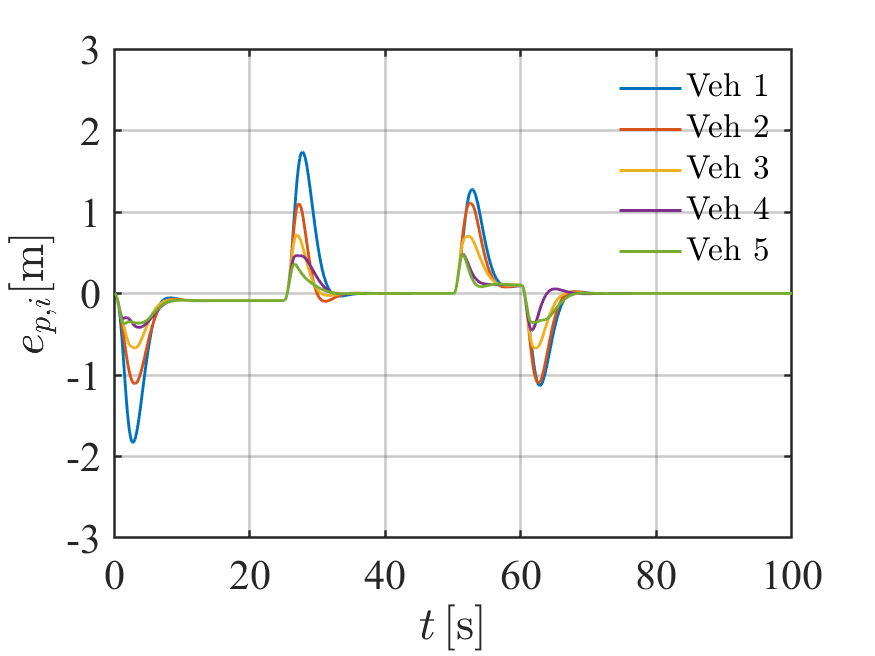}
        \caption{Position error (ours).}
        \label{fig:subfig_ep_proposed}
    \end{subfigure}
    \hfill
    \begin{subfigure}{0.48\linewidth}
        \includegraphics[width=\linewidth]{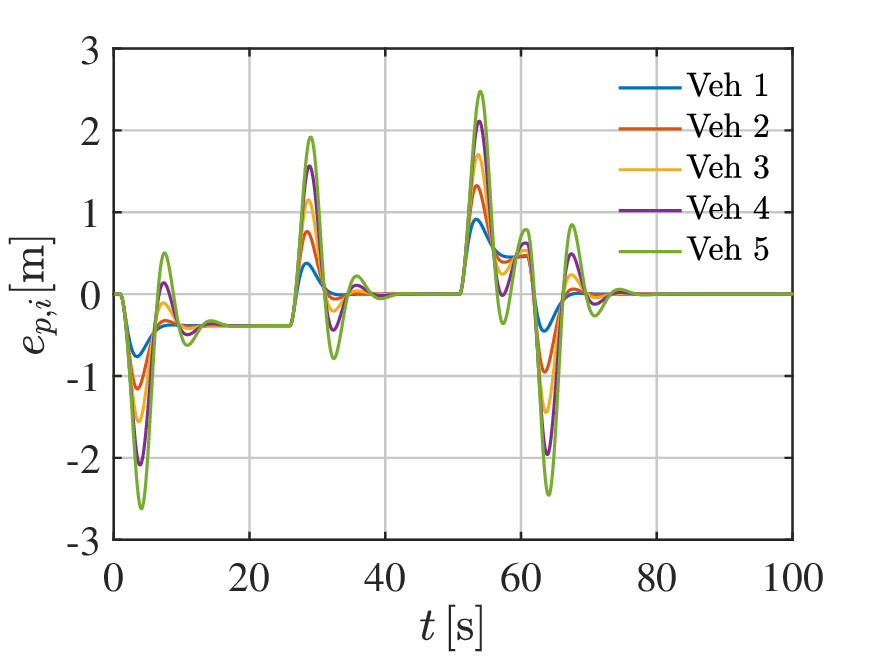}
        \caption{Position error (method in \cite{bian2020distributed}).}
        \label{fig:subfig_ep_benchmark}
    \end{subfigure}
    \vfill
    \begin{subfigure}{0.48\linewidth}
        \includegraphics[width=\linewidth]{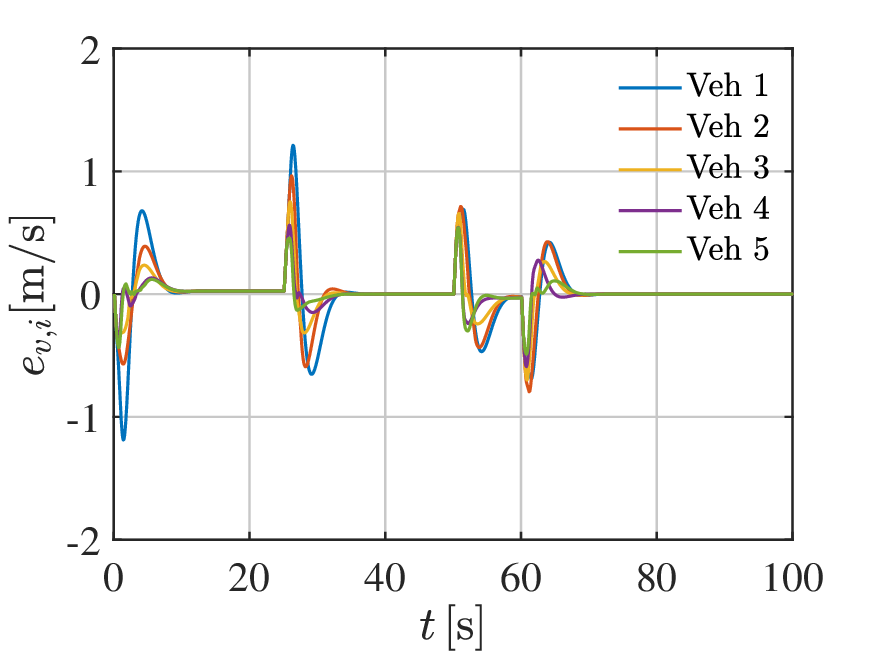}
        \caption{Velocity error (ours).}
        \label{fig:subfig_ev_proposed}
    \end{subfigure}
    \hfill
    \begin{subfigure}{0.48\linewidth}
        \includegraphics[width=\linewidth]{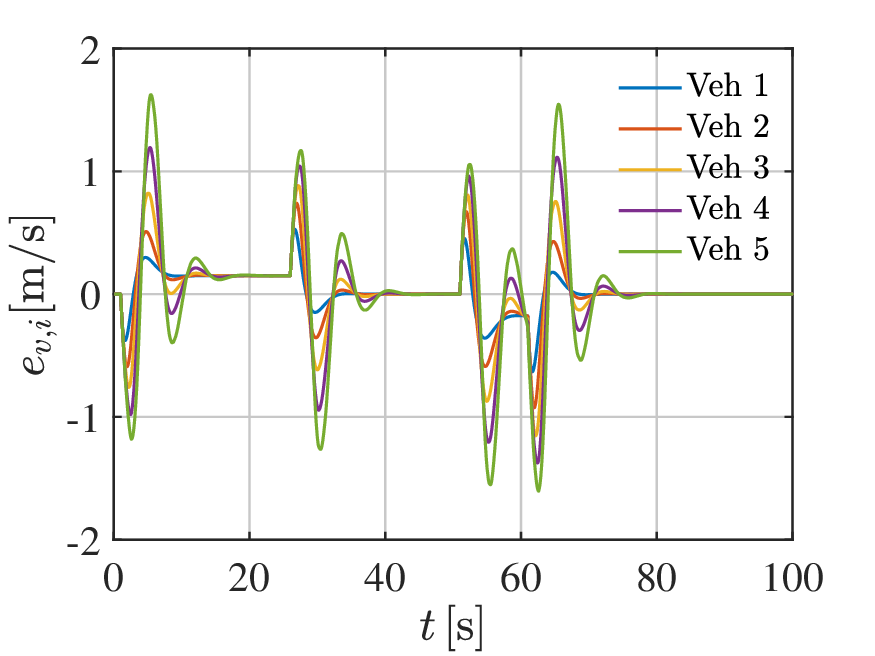}
        \caption{Velocity error (method in \cite{bian2020distributed}).}
        \label{fig:subfig_ev_benchmark}
    \end{subfigure}
    \vfill
    \begin{subfigure}{0.48\linewidth}
        \includegraphics[width=\linewidth]{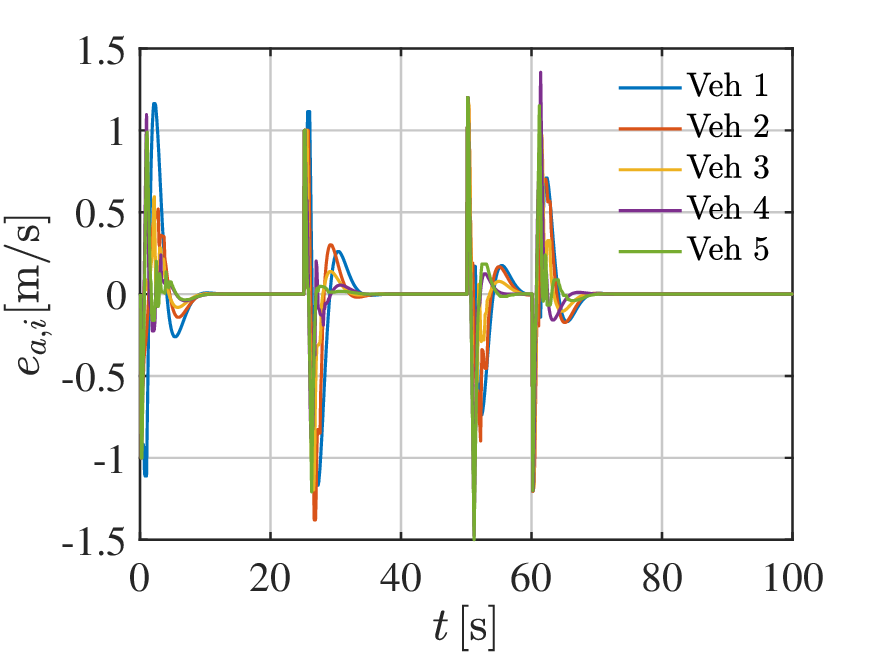}
        \caption{Acceleration error (ours).}
        \label{fig:subfig_ea_proposed}
    \end{subfigure}
    \hfill
    \begin{subfigure}{0.48\linewidth}
        \includegraphics[width=\linewidth]{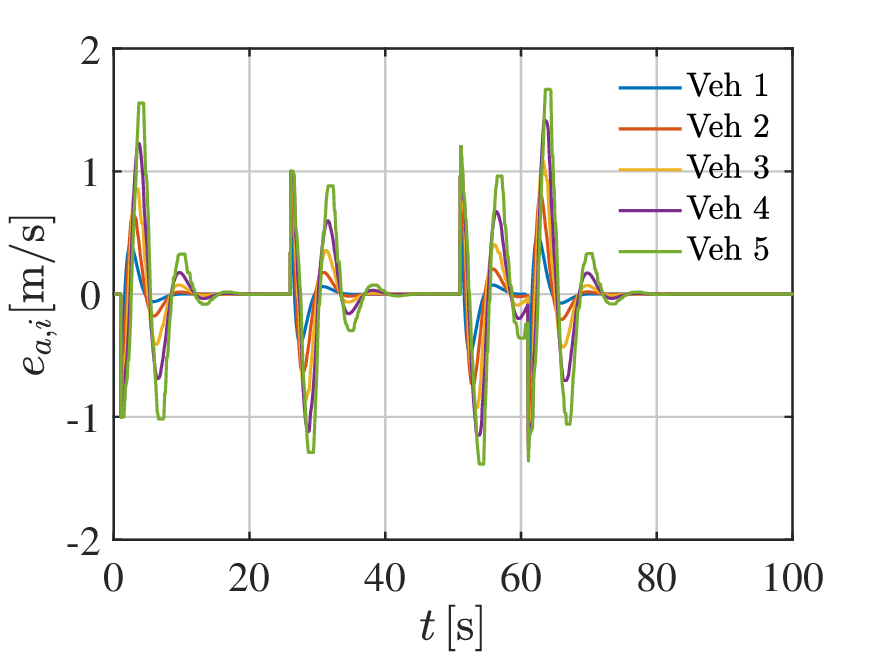}
        \caption{Acceleration error (method in \cite{bian2020distributed}).}
        \label{fig:subfig_ea_benchmark}
    \end{subfigure}
    \vfill
    \begin{subfigure}{0.48\linewidth}
        \includegraphics[width=\linewidth]{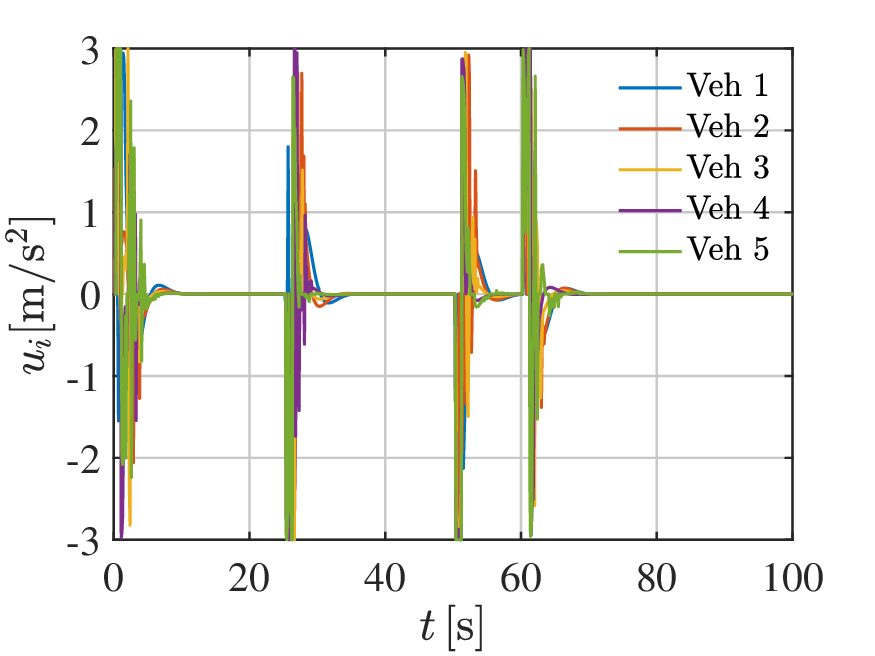}
        \caption{Control input (ours).}
        \label{fig:subfig_u_proposed}
    \end{subfigure}
    \hfill
    \begin{subfigure}{0.48\linewidth}
        \includegraphics[width=\linewidth]{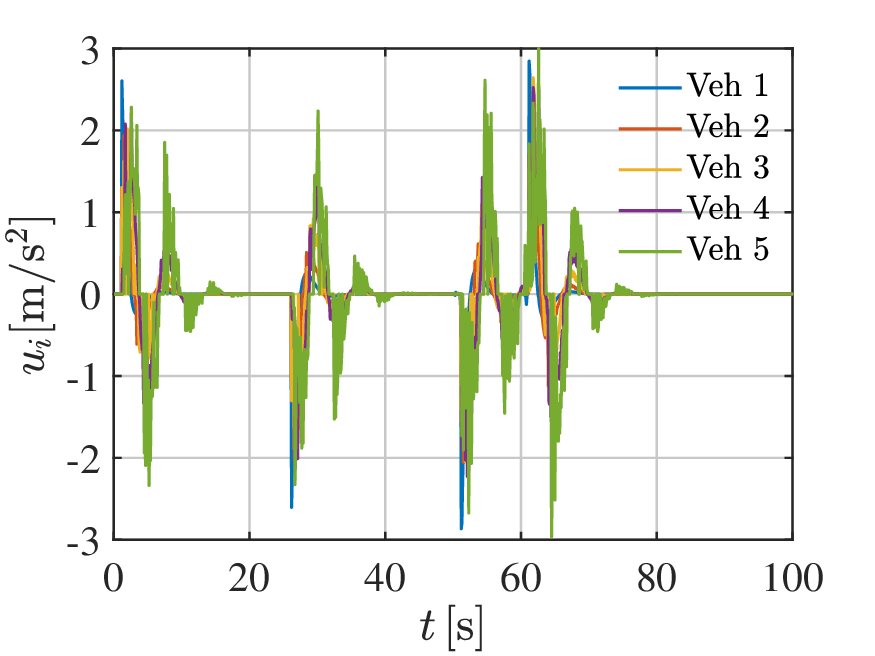}
        \caption{Control input (method in \cite{bian2020distributed}).}
        \label{fig:subfig_u_benchmark}
    \end{subfigure}
    \caption{Profiles of tracking errors: ours v.s. benchmark controller \cite{bian2020distributed}.}
    \label{fig:controller_mainfig}
    \vspace{-0.6cm}
\end{figure}
\section{Conclusion}\label{Conclusion}
\par This study proposes an observer-based DMPC framework to ensure stability under directed dynamic switching topologies. We use Markovian switching topologies and design a fully distributed adaptive observer to obtain the leader's information, unaffected by the switching of topologies. A sufficient condition for the asymptotic mean-square stability of the observer error system is provided. The terminal update law of the DMPC is constructed to guarantee asymptotic terminal consensus. Moreover, string stability constraints are established based on the observed information, ensuring predecessor-follower string stability. Future work will focus on extending this approach to mixed platoon control and examining the effects of communication delays.



\bibliographystyle{IEEEtran}
\small\bibliography{IEEEabrv,reference}

\begin{thebibliography}{10}
\providecommand{\url}[1]{#1}
\csname url@samestyle\endcsname
\providecommand{\newblock}{\relax}
\providecommand{\bibinfo}[2]{#2}
\providecommand{\BIBentrySTDinterwordspacing}{\spaceskip=0pt\relax}
\providecommand{\BIBentryALTinterwordstretchfactor}{4}
\providecommand{\BIBentryALTinterwordspacing}{\spaceskip=\fontdimen2\font plus
\BIBentryALTinterwordstretchfactor\fontdimen3\font minus \fontdimen4\font\relax}
\providecommand{\BIBforeignlanguage}[2]{{%
\expandafter\ifx\csname l@#1\endcsname\relax
\typeout{** WARNING: IEEEtran.bst: No hyphenation pattern has been}%
\typeout{** loaded for the language `#1'. Using the pattern for}%
\typeout{** the default language instead.}%
\else
\language=\csname l@#1\endcsname
\fi
#2}}
\providecommand{\BIBdecl}{\relax}
\BIBdecl

\bibitem{braiteh2024platooning}
F.-E. Braiteh, F.~Bassi, and R.~Khatoun, ``Platooning in connected vehicles: A review of current solutions, standardization activities, cybersecurity, and research opportunities,'' \emph{{IEEE} Trans. Intell. Veh.}, early access, Aug. 22, 2024, doi: \href{http://dx.doi.org/10.1109/tiv.2024.3447916}{10.1109/TIV.2024.3447916}.

\bibitem{Zhu2024Finite}
Y.~Zhu, Y.~Li, K.~Zeng, L.~Huang, G.~Huang, W.~Hua, Y.~Wang, and X.~Gao, ``Finite-time cooperative control for vehicle platoon with sliding-mode controller and disturbance observer,'' \emph{IEEE Trans. Intell. Transp. Syst.}, vol.~25, no.~9, p. 10679–10688, Sep. 2024.

\bibitem{Feng2024}
Y.~Feng, S.~Yu, E.~Sheng, Y.~Li, S.~Shi, J.~Yu, and H.~Chen, ``Distributed mpc of vehicle platoons considering longitudinal and lateral coupling,'' \emph{IEEE Trans. Intell. Transp. Syst.}, vol.~25, no.~3, p. 2293–2310, Mar. 2024.

\bibitem{10774505}
H.~Dong, Y.~Li, X.~Wang, Z.~Yang, H.~Qin, and L.~Zhu, ``Adaptive spacing policy for smooth cooperative cut-in control in mixed traffic flow,'' in \emph{2024 IEEE 22nd International Conference on Industrial Informatics (INDIN)}, 2024, pp. 1--6.

\bibitem{Wen2024Adaptive}
S.~Wen, G.~Guo, Y.~Zhao, and Z.~Gao, ``Adaptive optimal control for vehicle platoons with uncertain communication topology,'' \emph{{IEEE} Trans. Veh. Technol.}, vol.~73, no.~9, pp. 12\,500--12\,514, Sep. 2024.

\bibitem{li2020distributed}
K.~Li, Y.~Bian, S.~E. Li, B.~Xu, and J.~Wang, ``Distributed model predictive control of multi-vehicle systems with switching communication topologies,'' \emph{Transp. Res. Part C Emerging Technol.}, vol. 118, p. 102717, Sep. 2020.

\bibitem{Wang2022model}
P.~Wang, H.~Deng, J.~Zhang, L.~Wang, M.~Zhang, and Y.~Li, ``Model predictive control for connected vehicle platoon under switching communication topology,'' \emph{IEEE Trans. Intell. Transp. Syst.}, vol.~23, no.~7, pp. 7817--7830, Jul. 2022.

\bibitem{Jun2023Distributed}
M.~Jun, L.~Zuo, M.~Yan, and J.~Zhang, ``Distributed vehicle platoon tracking control for optimal energy consumption with switching topologies,'' in \emph{Proc. Chin. Control Decis. Conf. (CCDC)}, May 2023, pp. 161--167.

\bibitem{Ding2024Distributed}
S.~Ding, H.~Ai, X.~Xie, and Y.~Jing, ``Distributed adaptive platooning control of connected vehicles with markov switching topologies,'' \emph{IEEE Trans. Intell. Transp. Syst.}, vol.~25, no.~11, pp. 18\,421--18\,432, Nov. 2024.

\bibitem{wang2022fully}
J.~Wang, X.~Deng, J.~Guo, Y.~Luo, and K.~Li, ``A fully distributed antiwindup control protocol for intelligent-connected electric vehicles platooning with switching topologies and input saturation,'' \emph{IEEE/ASME Trans. Mechatron.}, vol.~28, no.~3, pp. 1683--1694, Jun. 2022.

\bibitem{wang2024cyber}
L.~Wang, H.~Qin, P.~Hang, C.~Chen, M.~Hu, and Y.~Bian, ``Cyber-physical-level safe control of intelligent and connected vehicles with unreliable communication and actuator faults,'' \emph{{IEEE} Trans. Intell. Veh.}, early access, May. 10 2024, doi: \href{http://dx.doi.org/10.1109/TIV.2024.3397268}{10.1109/TIV.2024.3397268}.

\bibitem{meyn2012markov}
S.~P. Meyn and R.~L. Tweedie, \emph{Markov chains and stochastic stability}.\hskip 1em plus 0.5em minus 0.4em\relax Springer Science \& Business Media, 2012.

\bibitem{bian2024distributed}
Y.~Bian, X.~Wang, Y.~Tan, M.~Hu, C.~Du, Z.~Sun, and G.~Guo, ``Distributed model predictive control of connected and automated vehicles with markov packet loss,'' \emph{IEEE Trans. Transp. Electrif.}, early access, Nov. 28 2024, doi: \href{http://dx.doi.org/10.1109/TTE.2024.3507864}{10.1109/TTE.2024.3507864}.

\bibitem{dunbar2011distributed}
W.~B. Dunbar and D.~S. Caveney, ``Distributed receding horizon control of vehicle platoons: Stability and string stability,'' \emph{{IEEE} Trans. Autom. Control}, vol.~57, no.~3, pp. 620--633, Mar. 2011.

\bibitem{do2012continuous}
O.~L. do~Valle~Costa, M.~D. Fragoso, and M.~G. Todorov, \emph{Continuous-time Markov jump linear systems}.\hskip 1em plus 0.5em minus 0.4em\relax Springer Science \& Business Media, 2012.

\bibitem{wang2022distributed}
J.~Wang, G.~Wen, and Z.~Duan, ``Distributed antiwindup consensus control of heterogeneous multiagent systems over markovian randomly switching topologies,'' \emph{{IEEE} Trans. Autom. Control}, vol.~67, no.~11, pp. 6310--6317, Nov. 2022.

\bibitem{liu2024time}
Y.~Liu, H.~Fan, L.~Liu, and B.~Wang, ``Time-varying formation control of heterogeneous multi-agent systems under markovian switching topology,'' in \emph{Proc. Chin. Control Decis. Conf. (CCDC)}, May 2024, pp. 4135--4140.

\bibitem{wang2025string}
X.~Wang, H.~Qin, Y.~Bian, D.~Zhao, and N.~Zheng, ``String stability under general topologies for cavs: A coupled sliding surface-based distributed tmpc approach,'' \emph{Transp. Res. Part E Logist. Transp. Rev.}, vol. 194, p. 103937, Feb. 2025.

\bibitem{bian2020distributed}
Y.~Bian, C.~Du, M.~Hu, and H.~Liu, ``Distributed model predicted control of multi-agent systems with applications to multi-vehicle cooperation,'' 2020, \textit{arXiv:2009.06889}.

\bibitem{wang2023dmpc}
X.~Wang, M.~Hu, Y.~Bian, X.~Wang, H.~Qin, and R.~Ding, ``Dmpc-based string stable platoon control with robustness against communication delays,'' \emph{Veh. Commun.}, vol.~43, p. 100655, Oct. 2023.

\end{thebibliography}

\end{document}